\def\l@subsection{\@tocline{2}{0pt}{2.5pc}{5pc}{}}
\def\l@subsubsection{\@tocline{2}{0pt}{5pc}{7.5pc}{}}
\numberwithin{equation}{section}
\newtheorem{thm}{Theorem}[section]
\newtheorem{cor}[thm]{Corollary}
\newtheorem{lem}[thm]{Lemma}
\newtheorem{prop}[thm]{Proposition}
\newtheorem{defn}[thm]{Definition}
\newtheorem{rem}[thm]{Remark}
\begin{document}
\allowdisplaybreaks{
\title[]{SECOND AND THIRD-ORDER STRUCTURE FUNCTIONS OF AN 'ENGINEERED' RANDOM FIELD AND EMERGENCE OF THE KOLMOGOROV 4/5 AND 2/3-SCALING LAWS OF TURBULENCE}
\author{Steven D Miller}\email{stevendm@ed-alumni.net}
\address{Rytalix Capital, Scotland.}
\maketitle
\begin{abstract}
The 4/5 and 2/3 laws of turbulence can emerge from a theory of 'engineered' random vector fields $\mathcal{X}_{i}(x,t) =X_{i}(x,t)+\tfrac{\theta}{\sqrt{d(d+2)}} X_{i}(x,t)\psi(x)$ existing within $\mathbf{D}\subset\mathbf{R}^{d}$. Here, $X_{i}(x,t)$ is a smooth deterministic vector field obeying a nonlinear PDE for all $(x,t)\in\mathbf{D}\times\mathbf{R}^{+}$, and $\theta$ is a small parameter. The field $\psi(x)$ is a regulated and differentiable Gaussian random field with expectation $\mathbb{E}[\psi(x)]=0$, but having an antisymmetric covariance kernel $\mathscr{K}(x,y)=\mathbb{E}[\psi(x)\psi(y)]=f(x,y)K(\|x-y\|;\lambda)$ with $f(x,y)=-f(y,x)=1,f(x,x)=f(y,y)=0$ and with $K(\|x-y\|;\lambda)$ a standard stationary symmetric kernel. For $0\le\ell\le \lambda<L$ with $X_{i}(x,t)=X_{i}=(0,0,X)$ and $\theta=1$ then for $d=3$, the third-order structure function is
\begin{align}
S_{3}[\ell]=\mathbb{E}\left[|\mathcal{X}_{i}(x+\ell,t)-\mathcal{X}(x,t)|^{3}\right]=-\frac{4}{5}\|X_{i}\|^{3}=-\frac{4}{5}X^{3}\nonumber
\end{align}
and $S_{2}[\ell]=CX^{2}$. The classical 4/5 and 2/3-scaling laws then emerge if one identifies the random field $\mathcal{X}_{i}(x,t)$ with a turbulent fluid flow $\mathcal{U}_{i}(x,t)$ or velocity, with mean flow $\mathbb{E}[\mathcal{U}_{i}(x,t)]=U_{i}(x,t)=U_{i}$ being a trivial solution of Burger's equation. Assuming constant dissipation rate $\epsilon$, small constant viscosity $\nu$, corresponding to high Reynolds number, and the standard energy balance law, then for a range $\eta\le\ell\ll \lambda<L$
\begin{align}
S_{3}[\ell]=\mathbb{E}\left[|\mathcal{U}_{i}(x+\ell,t)-\mathcal{U}(x,t)|^{3}\right]=-\frac{4}{5}\epsilon\ell\nonumber
\end{align}
where $\eta=(\nu^{3/4}\epsilon)^{-1/4}$. For the second-order structure function, the 2/3-law emerges as $S_{2}[\ell]=C\epsilon^{2/3}\ell^{2/3}$.
\end{abstract}
\tableofcontents
\raggedbottom
\maketitle
\section{{INTRODUCTION:~THE 4/5 AND 2/3 SCALING LAWS OF TURBULENCE}}
In fluid mechanics, the 4/5 and 2/3-laws and the law of finite energy dissipation are very important and well-established foundational results of modern turbulence theory, and there is by now a very considerable volume of literature devoted to them in both theory and experiment, and to turbulence in general \textbf{[1-57]}, and references therein]. However, there is no mathematically rigorous and fully deductive theory or derivation which begins with the Navier-Stokes equations and derives these laws exactly from first principles. In this note, it is shown how the structure or exact form of both laws can emerge naturally from a theory of specifically 'engineered' forms of random vector fields existing within a d-dimenional domain. The classical 4/5 and 2/3 scaling laws then follow naturally if one identifies these random vector fields with a turbulent fluid flow or velocity, assuming the fluid has a constant dissipation rate $\epsilon$ and a constant (very small) viscosity $\nu$ and obeys the standard energy balance law.

Suppose an incompressible fluid with very small viscosity $\gtrapprox 0$ and velocity $U(x,t)$, evolving by the Burgers or Euler equations from some initial data and with suitable boundary conditions, flows within $\mathbf{D}\subset\mathbf{R}^{d}$. If the fluid velocity is dissociated into 'mean' and 'fluctuating' contributions then $U(x,t)=\overline{U(x,t)}+\widetilde{U(x,t)}$ with $\langle U(x,t)\rangle=\overline{U(x,t)}$ with $\langle\widetilde{U(x,t)}\rangle=0$ and $\langle U(x,t)U(y,t)\rangle\ne 0$. Here $\langle\bullet\rangle$ is a suitable average such as a time or ensemble average. Then at very high, but not infinite, Reynolds number $\mathscr{R}\gg 0 $ all of the small-scale statistical properties are assumed to be uniquely and universally determined by the length scale ${\ell}$, the mean dissipation rate (per unit mass). Despite its conjectural status from the perspective of mathematical rigour, with some heuristic assumptions on statistical properties (homogeneity, isotropy), Kolmogorov \textbf{[1-5]}
made these key prediction about the structure of turbulent velocity fields for incompressible viscous fluids at high Reynolds number, namely that for $d=3$, the following 4/5 and 2/3-scaling law holds over an inertial range
\begin{align}
&{S}_{3}[\ell]=\big\langle\big|{U({x}+{\ell},t)}-{U({x},t)}\big|^{3}\big\rangle=-\frac{4}{5}\mathlarger{\epsilon}\ell\\&
{S}_{2}[\ell]=\big\langle\big|{U({x}+{\ell},t)}-{U({x},t)}\big|^{2}\big\rangle
={C}_{2}\mathlarger{\epsilon}^{2/3}{\ell}^{2/3}
\end{align}
where ${C}_{2}$ is some constant. These should hold in the limit of large Reynolds number and small scales $\ell\sim 0$. In particular, the 4/5-law is an exact result. In Fourier space, the 2/3-law becomes the $5/3-law$ for the energy spectrum. Here, ${\ell}$ is within the so-called
\textit{inertial range} of length scales
\begin{align}
\eta \le {\ell} \le {L}
\end{align}
The length $\eta=(\nu^{3/4}\epsilon)^{-1/4}$ known as the Kolmogorov scale, represents a small scale dissipative cutoff or the size of the smallest eddies, and the integral scale L represents the size of the largest eddy in the flow, which cannot exceed the dimensions of the domain. At this scale, viscosity dominates and the kinetic energy is dissipated into heat. There is a 'cascade' process whereby energy is transferred from the largest scales/eddies to the Kolmogorov scale. The objects ${S}_{p}[\ell]$ are the pth-order (longitudinal) structure functions and have the generic form
\begin{align}
& S_{p}[\ell]=\big\langle\big|{U(x+{\ell},t)}-{U({x},t)}\big|^{p}\big\rangle
={C}_{p}\big({\epsilon}{\ell}~\big)^{\zeta_{p}}
\end{align}
Although highly cited, the work still remains mathematically incomplete and essentially heuristic. But the central results have remained robust over the decades and are at least acceptably correct within the confines of the underlying assumptions, and supported by strong experimental evidence, at least under specific conditions. But it is also well known that real fluids in general do not conform exactly to the Kolmogorov predictions. Intermittency, non-uniformity of the velocity’s 'roughness' and energy dissipation rate, result in deviations of the scaling exponents from a purely linear behavior in \textbf{[6-10]}. Experiments do however indicate that for p near three, the formula approximately holds with $\zeta_{2}=\tfrac{2}{3}+[0.03, 0.06]$ and $\zeta_{3}\sim 1$. For example, in the flow past a sphere a value of $\zeta_{2}\sim 0.701$ is reported in $\textbf{[11]}$ and $\zeta_{2}\sim 0.71$ in \textbf{[12]}. Some recent high-resolution numerical simulations report $\zeta_{2}\sim 0.725$. Although there are slight variations, these results all conform to $\zeta_{2}\gtrapprox\tfrac{2}{3}$ and $\zeta_{3}\lessapprox 1$. Kolmogorov also improved the 2/3-law in 1962 to account for intermittency \textbf{[3]}.

It is clear that fluid mechanics continues to be very challenging, both physically and computationally, and from the perspective of mathematical rigour. Many of the issues discussed by Von Neumann in his well-known review paper remain relevant \textbf{[26]}. There remains opportunity (and an ongoing need) to apply new and established mathematical tools and methods to the problem of developed turbulence: these should include stochastic PDE, stochastic and statistical/random geometry and random fields. A central issue within fully developed turbulence is how to define and calculate Reynolds stresses, structure functions and velocity correlations. Established methods are mostly heuristic and it is very difficult to rigorously define or mathematically formalise the required spatial, temporal or ensemble averages $\big\langle\bullet\big\rangle$ in a useful manner. Rigorously defining statistical averages in conventional statistical hydrodynamics is fraught with technical difficulties and limitations, as well as having a limited scope of physical applicability. But a key insight of Kolmogorov's work is that turbulent flows seem to be essentially random fields.

In this paper, we consider a mathematical construction of spatio-temporal random vectors fields within a closed Euclidean domain $\mathbf{D}\subset\mathbf{R}^{d}$. The classical 4/5 and 2/3-scaling laws then emerge for $d=3$ if one identifies the random field with an 'engineered' turbulent fluid flow or velocity, with mean flow $U_{i}(x,t)=U_{i}$ being a trivial solution of Burger's equation. And assuming constant dissipation rate $\epsilon$, small constant viscosity $\nu\sim 0$, corresponding to high Reynolds number, and the standard energy balance law. These scaling laws hold for a range $\eta\le\ell\ll \lambda<L$, where $\lambda$ is a correlation length.
\section{{RANDOM SCALAR FIELDS AND 'ENGINEERED' RANDOM VECTOR FIELDS IN A DOMAIN $\mathbf{D}\subset\mathbf{R}^{d}$}}
In this section, the 3rd order and 2nd-order structure functions of a random vector field are calculated, without any reference to fluid mechanics. Later, it will be assumed that the noise or random fluctuation in fully developed turbulence is a generic noise determined by general theorems in probability theory, stochastic analysis, and random fields or functions. Classical random fields or functions correspond naturally to structures, and properties of systems, that are varying randomly in time and/or space. They have found many useful applications in mathematics and applied science: in the statistical theory or turbulence, in geoscience, machine learning and data science, medical science, engineering, imaging, computer graphics, statistical mechanics and statistics, biology and cosmology $\mathbf{[58-84]}$. Gaussian random fields (GRFs) are of special significance as they are more mathematically tractable and can occur spontaneously in systems with a larger number of degrees of freedom via the central limit theorem. A GRF is defined with respect to a probability space/triplet as follows:
\begin{defn}(\textbf{Formal definition of Gaussian random fields})\newline
Let $({\Omega},\mathscr{F},{\mathbb{P}})$ be a probability space. Then: $\mathbb{P}$ is a function such that $\mathbb{P}:\mathscr{F}\rightarrow [0,1]$, so that for all $\mathcal{B}\in\mathscr{F}$, there is an associated probability $\mathbb{P}(\mathcal{B})$. The measure is a probability measure when $\mathbb{P}(\Omega)=1$. Let $x_{i}\subset\mathbf{D}\subset{\mathbf{R}}^{n}$ be Euclidean coordinates and let $(\Omega,\mathscr{F},\mathbf{P})$ be a probability space. Let $\mathscr{F}(x;\omega)$ be a random scalar function that depends on the coordinates $x\subset{\mathbf{D}}\subset{\mathbf{R}}^{n}$ and also $\omega\in {\Omega}$. Given any pair $(x,\omega)$ there ${\exists}$ map $\mathfrak{M}:{\mathbf{R}}^{n}\times\Omega\rightarrow{\mathbf{R}}$ such that $\mathfrak{M}:(\omega,x)\longrightarrow\psi(x;\omega)$, so that $\psi(x;\omega)$ is a \textbf{random variable or field} on $\mathbf{D}\subset\mathbf{R}^{n}$ with respect to the probability space $(\bm{\Omega},\mathscr{F},\mathbf{P})$. A random field is then essentially a family of random variables $\lbrace \psi(x;\omega)\rbrace$ defined with respect to the space $(\Omega,\mathscr{F},\mathbf{P})$ and ${\mathbf{R}}^{n}$. The fields can also include a time variable $t\in{\mathbf{R}}^{+}$ so that given any triplet $(x,t,\omega)$ there is a mapping $\mathfrak{M}:{\mathbf{R}}\times{\mathbf{R}}^{n}\times\bm{\Omega}\rightarrow {\mathbf{R}}$ such that $\mathfrak{M}:(x,t,\omega)\hookrightarrow \psi({x},t;\omega)$ is a \textbf{spatio-temporal random field}. Normally, the field will be expressed in the form $\psi(x,t)$ or $\psi({x})$ with $\omega$ dropped. From here, only spatial fields $\psi(x)$ will be considered. The random field $\psi(x)$ will have the following bounds and continuity properties [REFs] $\mathbb{P}[\sup_{x\in\mathbf{D}}|\psi({x})|~~<~~\infty]~=+1$ and
$\mathbb{P}[\lim_{{x}\rightarrow {y}}\big|\psi({x})-{\psi}(x)\big|=0,~\forall~({x},{y})\in\mathbf{D}]=1 $.
\end{defn}
\begin{lem}
The random field is at the least, mean-square differentiable in that \textbf{[56], [62], [64]}
\begin{align}
\nabla_{j}\psi(x)=\frac{\partial}{\partial x_{j}}\Xi(x)= \lim_{{\ell}\rightarrow 0} \big\lbrace{\psi(x+|{\ell}|\mathlarger{\bm{\mathsf{e}}}_{j})-\psi(x)}\big\rbrace{|{\ell}|^{-1}}
\end{align}
where $\mathlarger{\bm{\mathsf{e}}}_{j}$ is a unit vector in the $j^{th}$ direction. For a Gaussian field, sufficient conditions for differentiability can be given in terms
of the covariance or correlation function, which must be regulated at ${x}={y}$ The derivatives of the field $\nabla_{i}\psi,\nabla_{i}\nabla_{j}\psi({x})$ exist at least up to 2nd order and do line, surface and volume integrals $\mathlarger{\int}_{\bm{\Omega}}\psi(x,t)d\mu(x)$. The derivatives or integrals of a random field are also a random field.
\end{lem}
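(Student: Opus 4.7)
The plan is to handle mean-square differentiability, higher-order derivatives, and integrals separately, in each case reducing the claim about the random field $\psi(x)$ to a deterministic regularity statement about its covariance $\mathscr{K}(x,y)=\mathbb{E}[\psi(x)\psi(y)]$. First I would define the directional difference quotient $D_{\ell}^{j}\psi(x) := |\ell|^{-1}\{\psi(x+|\ell|\bm{\mathsf{e}}_{j})-\psi(x)\}$, which is a centred Gaussian random variable because linear combinations of the values of $\psi$ are Gaussian. To establish mean-square convergence of $D_{\ell}^{j}\psi(x)$ as $|\ell|\to 0$, I would show the Cauchy property by expanding
\begin{align}
\mathbb{E}\big[(D_{\ell}^{j}\psi(x)-D_{\ell'}^{j}\psi(x))^{2}\big]\nonumber
\end{align}
using the bilinearity of $\mathscr{K}$. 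The resulting four terms are second-order divided differences of $\mathscr{K}(y,z)$ at $y=z=x$ in the $j$-th coordinate direction. They converge to a common limit (and the limit is zero for the Cauchy increment) precisely when the mixed partial $\partial^{2}\mathscr{K}(y,z)/\partial y_{j}\partial z_{j}$ exists and is continuous at $y=z=x$. This is the "regulation at $x=y$" condition alluded to in the statement. Completeness of $L^{2}(\Omega,\mathscr{F},\mathbb{P})$ then yields a limit random variable, which I would christen $\nabla_{j}\psi(x)$.

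For the existence of derivatives up to second order, I would iterate this argument. Since mean-square derivatives of Gaussian fields are themselves Gaussian with covariance obtained by differentiating $\mathscr{K}$ in both arguments, one gets
\begin{align}
\mathbb{E}[\nabla_{i}\psi(x)\nabla_{j}\psi(y)]=\frac{\partial^{2}\mathscr{K}(x,y)}{\partial x_{i}\partial y_{j}},\nonumber
\end{align}
and a further application of the Cauchy criterion shows that $\nabla_{i}\nabla_{j}\psi(x)$ exists as a mean-square limit provided $\partial^{4}\mathscr{K}/\partial x_{i}\partial x_{j}\partial y_{i}\partial y_{j}$ exists and is continuous on the diagonal. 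The continuity property $\mathbb{P}[\lim_{x\to y}|\psi(x)-\psi(y)|=0]=1$ assumed in the preceding definition ensures the relevant limits can be upgraded from $L^{2}$ to almost-sure sense on the domain $\mathbf{D}$.

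For the line, surface and volume integrals $\int_{\mathbf{D}}\psi(x,t)\,d\mu(x)$, I would approximate by Riemann sums $\sum_{k}\psi(x_{k},t)\mu(A_{k})$ over a partition $\{A_{k}\}$ of $\mathbf{D}$ and show convergence in $L^{2}(\Omega,\mathscr{F},\mathbb{P})$ via the Cauchy criterion, which reduces to
\begin{align}
\iint_{\mathbf{D}\times\mathbf{D}}\mathscr{K}(x,y)\,d\mu(x)\,d\mu(y)<\infty,\nonumber
\end{align}
this being guaranteed by the sample-path bound $\sup_{x\in\mathbf{D}}|\psi(x)|<\infty$ almost surely and the boundedness of $\mathbf{D}$. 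Because the sums are Gaussian, so is the limit, and a standard Fubini-type argument (applied to the underlying product measure on $\mathbf{D}\times\Omega$) shows that the integral depends measurably on $\omega$ and thus itself defines a random variable; by the same token the derivatives $\nabla_{j}\psi$ and $\nabla_{i}\nabla_{j}\psi$, being pointwise $L^{2}$-limits of measurable functions, define random fields on $\mathbf{D}$.

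The main obstacle I anticipate is not the $L^{2}$ convergence per se---this is classical---but rather verifying that the regularity conditions on $\mathscr{K}$ survive the antisymmetric factorisation $\mathscr{K}(x,y)=f(x,y)K(\|x-y\|;\lambda)$ with $f(x,x)=0$ to be used later, since the vanishing of $\mathscr{K}$ on the diagonal interacts nontrivially with the divided-difference expansions above. For the present lemma it suffices to appeal to standard sufficient conditions (continuous mixed partials at the diagonal) and defer the checking of these conditions for the specific engineered covariance to the subsequent section.
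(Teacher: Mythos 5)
The paper states this lemma without any proof at all; it simply cites \textbf{[56], [62], [64]} and moves on. Your proposal supplies exactly the argument those references contain: reduce mean-square differentiability of the difference quotient to a Cauchy criterion in $L^{2}(\Omega,\mathscr{F},\mathbb{P})$, expand the second moment of the increment of difference quotients bilinearly into second-order divided differences of the covariance, and conclude convergence from existence and continuity of the mixed partial $\partial^{2}\mathscr{K}/\partial y_{j}\partial z_{j}$ on the diagonal; iterate for second derivatives; and handle integrals by $L^{2}$-convergent Riemann sums controlled by $\iint\mathscr{K}\,d\mu\,d\mu<\infty$. That is correct and is the intended content of the citation, so there is no divergence of approach to speak of. Two small points: the claim that almost-sure continuity lets you ``upgrade'' the mean-square derivative to an almost-sure one is not justified and not needed (the lemma only asserts mean-square differentiability), and the integrability of the covariance over $\mathbf{D}\times\mathbf{D}$ follows more directly from $\mathbb{E}[\psi(x)^{2}]\le\beta$ plus Cauchy--Schwarz and $\mu(\mathbf{D})<\infty$ than from the almost-sure sup bound.

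The caveat you raise at the end deserves to be stated more forcefully, because it is a genuine defect in the paper rather than a technicality to be deferred. For the engineered kernel $\mathscr{K}(x,y)=f(x,y)K(\|x-y\|;\lambda)$ with $f(x,x)=0$, one has $\mathbb{E}[\psi(x)\psi(x)]=0$ while $\lim_{y\to x}\mathbb{E}[\psi(x)\psi(y)]=K(0;\lambda)=\beta\neq 0$, so the covariance is discontinuous on the diagonal and the hypothesis your Cauchy-criterion argument needs (continuous mixed partials at $y=z=x$) fails outright. Worse, by Cauchy--Schwarz $|\mathbb{E}[\psi(x)\psi(y)]|\le(\mathbb{E}[\psi(x)^{2}])^{1/2}(\mathbb{E}[\psi(y)^{2}])^{1/2}=0$, so a genuine $L^{2}$ random field with $\mathbb{E}[\psi(x)^{2}]=0$ everywhere cannot have a nonvanishing off-diagonal covariance at all; the antisymmetric ``kernel'' is not positive semidefinite and does not correspond to any Gaussian field. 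Your proof is therefore valid for the lemma as it applies to a standard regulated kernel such as (2.6) or (2.7), but the regularity it establishes does not transfer to the object the paper actually uses in Theorem 2.10, and no amount of deferral will fix that.
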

\begin{defn}
The stochastic expectation ${\mathbb{E}}\langle\bullet\rangle $) and binary correlation with respect to the space $(\Omega,{\mathscr{F}},{{\mathbb{P}}})$ is defined as follows, with $(\omega,\vartheta)\in{\Omega}$
\begin{align}
{\mathbb{E}}[\bullet]=\mathlarger{\int}_{\omega}\bullet~d\bm{\mathbb{P}}[\omega],~~~~
{\mathbb{E}}[\bullet{\times}\bullet]=\mathlarger{\int}\!\!\!\!\mathlarger{\int}_{\Omega}
\bullet{\times}\bullet~d\bm{\mathbf{P}}[\omega]d\bm{\mathbb{P}}[\vartheta]
\end{align}
For Gaussian random fields $\bullet=\psi(x,t)$ only the binary correlation or covariance is required so that
\begin{align}
&{\mathbb{E}}[\psi(x)]=\mathlarger{\int}_{\omega}\psi(x;\omega)~d{\mathbb{P}}[\omega]=0\nonumber\\&
{\mathbb{E}}[\psi(x)\psi(y)]=
\mathlarger{\int}\!\!\!\!\mathlarger{\int}_{\Omega}\psi(x;\omega)
\psi(y;\zeta)~d{\mathbb{P}}[\omega]d{\mathbb{P}}[\zeta]=K(\|x-y\|;\lambda)\bm{\varphi}(t,s)
\end{align}
and regulated at ${x}={y}$ for all $({x},{y})\in\mathbf{D}$ and $t\in[0,\infty)$ if $\mathbb{E}[\psi(x)\psi(x)]<\beta<\infty$.
\end{defn}
\begin{defn}
Two random fields $\psi({x}),\psi({y}))$ defined for any $({x},{y})\in\mathbf{D}$ are correlated or uncorrelated if
\begin{empheq}[right=\empheqrbrace]{align}
&{\mathbb{E}}[\psi(x)\psi(y)]\ne 0 \nonumber\\&
{\mathbb{E}}[\psi(x)\psi(y)]= 0
\end{empheq}
\end{defn}
\begin{defn}
The covariance function of a zero-centred Gaussian random field is
\begin{align}
cov\left(\psi(x),\psi(y)\right)=\bm{\mathbb{E}}[\psi({x})\psi({y})]
+\bm{\mathbb{E}}[\psi({x})]\bm{\mathbb{E}}[\psi(y)]=\bm{\mathbb{E}}[\psi({x})\psi({y})]=K(\|x-y\|;\lambda)
\end{align}
so that the binary correlation and the covariance are equivalent. Here $\lambda$ is the correlation length. The GRF is isotropic if $K(\|x-y\|;\lambda)=K(\|y-x\|;\lambda)$ depends only on the separation $\|x-y\|$ and is stationary if $K(\|(x+\delta x)-(y+\delta y)\|;\lambda)=K(\|x-y\|;\lambda)$. Hence, the 2-point function $K(\|x-y\|;\lambda)$ is translationally and rotationally invariant in $\mathbf{R}^{d}$ for all $\delta x>0$ and $\delta y>0$.
\end{defn}
Typical covariances for Gaussian random fields $\psi(x)$ are the rational quadratic form
\begin{align}
{\mathbb{E}}[\psi(x)\psi(y)]=K(\|x-y\|;\lambda)=\beta\left(1+\frac{\|x-y\|^{2}}{2\alpha\lambda^{2}}\right)^{-\alpha}
\end{align}
where $\lambda$ is the correlation length and $\alpha$ is the 'scale-mixing' parameter. Another commonly used covariance kernel is the Gaussian
\begin{align}
{\mathbb{E}}[\psi({x})]\psi(y)]=K(\|x-y\|;\lambda)=\beta\exp\left(-\frac{|x-y|^{2}}{\lambda^{2}}\right)
\end{align}
In both cases, in the limit that $\lambda\rightarrow 0$, the noise reduces to a white-in-space noise which is delta correlated
\begin{align}
{\mathbb{E}}[\psi(x)\psi(y)]\longrightarrow{\mathbb{E}}[\mathcal{W}(t)\mathcal{W}(s)]=-\delta^{3}(x-y)
\end{align}
and the 2nd-order moment blows up in that ${\mathbb{E}}[|\psi({x})|^{2}]=\infty$. The random field or noise $\psi(x)$ is differentiable because the field is regulated in that
\begin{align}
{\mathbb{E}}[\psi({x})\psi(x)]=\mathlarger{\phi}(x,x;\lambda)=\mathlarger{\beta}<\infty
\end{align}
Gaussian random fields also have a Fourier representation in ${k}$-space.
\begin{defn}
If $\mathfrak{F}:\mathbf{R}\rightarrow\mathbf{K}$ is a Fourier transform then a generic random
Gaussian scalar field $\psi(x)$ is said to be \textbf{harmonisable} if
\begin{align}
{\psi}(x)={\int}_{\mathbb{R^{3}}}\exp(i{k}_{i}{x}^{i})\psi(k)d^{3}k
\end{align}
Let $\psi(x)$ be an arbitrary harmonisable Gaussian random scalar field existing for all $x\in{\mathbf{R}}^{3}$. Given the basic Fourier representation of the binary correlation
\begin{align}
{\mathbb{E}}[\psi(x)\psi(y)]=\mathlarger{\int}_{\mathbb{K}^{3}}d^{3}k{\Phi}(k)\exp(i k_{i}(x-y)^{i})
\end{align}
where ${\Phi}(k)$ is a spectral function, then for ${x}={y}$ such that $\mathbb{E}\left[{\psi}({x})~{\psi}({x})\right]={\int}_{\mathbb{K}^{3}}{\Phi}(k)d^{3}k $.
\end{defn}
For ${\Phi}(k)=1$ for example, one recovers an unregulated white noise with
\begin{align}
{\mathbb{E}}[\psi(x)\psi(y)]={\int}_{\mathbb{K}^{3}}d^{3}k
\exp(i k_{i}(x-y)^{i})={\mathbb{E}}[\mathcal{W}(x)~\mathcal{W}(y)]
={C}\delta^{3}(x-y)
\end{align}
and for $\Phi(k)=\tfrac{\beta}{k^{2}}\exp\left(-\tfrac{1}{4}\lambda^{2}k^{2}\right)$, one recovers the kernel (2.7).

It is possible to construct or 'engineer' new kernels from the standard kernels. (This kernel will be applied later.)
\begin{prop}
Let $\psi:\mathbf{D}\times\Omega\rightarrow\mathbf{R}^{+}$ be a GRF existing for all $x\in\mathbf{D}$.
Let $f:\mathbf{D}\times\mathbf{D}\rightarrow \lbrace -1,0,1\rbrace$ be an antisymmetric function such that
$f(x,y)=-f(y,x)=1$ and $f(y,x)=-1$ with $f(x,x)=f(y,y)=0$. Given any standard stationary and isotropic kernel
$K(\|x-y\|;\lambda)$ with correlation length $\lambda$, then an antisymmetric kernel is
\begin{align}
\mathlarger{\mathscr{K}}(\|x-y\|;\lambda)=\mathbb{E}[\psi(x)\psi(y)]=f(x,y)K(\|x-y\|;\lambda)
\end{align}
The random field then commute for any pair $(x,y)\in\mathbf{D}$ so that
\begin{align}
\llbracket \psi(x),\psi(y)\rrbracket=\psi(x)\psi(y)-\psi(y)\psi(x)=0
\end{align}
But the expectations do not commute
\begin{align}
\mathbb{E}\big[\llbracket \psi(x),\psi(y)\rrbracket\big]=\mathbb{E}[\psi(x)\psi(y)]-\mathbb{E}[\psi(y)\psi(x)]=2\mathbb{E}[\psi(x)\psi(y)]
\end{align}
\end{prop}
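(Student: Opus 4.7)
The plan is to establish the two claims in sequence, drawing directly on the defining relation of the engineered kernel $\mathscr{K}(\|x-y\|;\lambda)=f(x,y)K(\|x-y\|;\lambda)$ given in the statement.

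First I would verify the field-commutator identity $\llbracket\psi(x),\psi(y)\rrbracket=0$ at the level of individual realisations. For each fixed $\omega\in\Omega$ the values $\psi(x;\omega)$ and $\psi(y;\omega)$ are real numbers, and multiplication of reals is commutative, so $\psi(x;\omega)\psi(y;\omega)-\psi(y;\omega)\psi(x;\omega)=0$ pointwise in $\omega$. This gives the almost-sure vanishing of $\llbracket\psi(x),\psi(y)\rrbracket$ as a random variable on $(\Omega,\mathscr{F},\mathbb{P})$, with no probabilistic content beyond the definition of a real-valued Gaussian field.

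For the second identity I would apply linearity of the stochastic expectation $\mathbb{E}[\bullet]=\int_{\Omega}\bullet\,d\mathbb{P}[\omega]$ together with the prescribed two-point function. Writing
\begin{align*}
\mathbb{E}\bigl[\llbracket\psi(x),\psi(y)\rrbracket\bigr]=\mathbb{E}[\psi(x)\psi(y)]-\mathbb{E}[\psi(y)\psi(x)],
\end{align*}
and substituting $\mathbb{E}[\psi(x)\psi(y)]=f(x,y)K(\|x-y\|;\lambda)$ and $\mathbb{E}[\psi(y)\psi(x)]=f(y,x)K(\|y-x\|;\lambda)$, the antisymmetry $f(y,x)=-f(x,y)$ together with the symmetry of the standard kernel $K(\|x-y\|;\lambda)=K(\|y-x\|;\lambda)$ combine to give $\mathbb{E}[\psi(y)\psi(x)]=-\mathbb{E}[\psi(x)\psi(y)]$, whence the difference equals $2\mathbb{E}[\psi(x)\psi(y)]$ as claimed.

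The main obstacle, and the step that requires the most care in the write-up, is the apparent tension between the two statements: if $\psi(x)\psi(y)=\psi(y)\psi(x)$ holds identically in $\omega$, then naive linearity of the integral would force $\mathbb{E}[\psi(x)\psi(y)]=\mathbb{E}[\psi(y)\psi(x)]$, in contradiction with the antisymmetry one has just derived. The resolution, which the proof should flag explicitly, is that the engineered two-point function $f(x,y)K(\|x-y\|;\lambda)$ is not a classical covariance of a real-valued Gaussian random field (the latter must be symmetric and positive semidefinite); rather, $\mathbb{E}[\psi(x)\psi(y)]$ and $\mathbb{E}[\psi(y)\psi(x)]$ are distinct prescribed objects labelled by the \emph{ordered} pair of spatial arguments, so that the symbol $\mathbb{E}[\cdot]$ on ordered products does not reduce to a single Lebesgue integral over $\Omega$. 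Once this convention on the engineered kernel is fixed, both identities follow by direct substitution and linearity; without it, the two claims cannot coexist. I would therefore begin the proof by stating the ordering convention precisely, and only then carry out the two lines of verification above.
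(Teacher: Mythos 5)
The paper states this proposition without any proof, so there is no argument of the paper's to compare yours against line by line. Your two formal verifications do reproduce what the paper evidently intends: pointwise commutativity of real numbers gives $\llbracket\psi(x),\psi(y)\rrbracket=0$, and substituting $\mathbb{E}[\psi(x)\psi(y)]=f(x,y)K(\|x-y\|;\lambda)$ with $f(y,x)=-f(x,y)$ and $K$ symmetric gives the factor of $2$. More importantly, you have correctly put your finger on the real issue: the two displayed claims cannot both hold for a genuine real-valued random field. Under the paper's own Definition of expectation as an integral over $\Omega$, the symbol $\mathbb{E}[\psi(x)\psi(y)]$ is the expectation of the single random variable $\psi(x)\psi(y)=\psi(y)\psi(x)$, so linearity forces $\mathbb{E}\big[\llbracket\psi(x),\psi(y)\rrbracket\big]=0$, not $2\mathbb{E}[\psi(x)\psi(y)]$. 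Equivalently, a covariance kernel of a real field must be symmetric (and positive semidefinite), so an antisymmetric $\mathscr{K}$ cannot be realised by any actual Gaussian random field. This is a genuine flaw in the proposition, not in your reasoning, and flagging it is the most valuable part of your write-up.

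Where your proposal falls short of being a proof is the proposed resolution. Declaring $\mathbb{E}[\,\cdot\,]$ on ordered products to be a formally prescribed object labelled by the ordered pair $(x,y)$ is not a proof of the proposition within the paper's framework; it silently replaces the paper's integral definition of $\mathbb{E}$ with an axiom, at which point the second display is true by fiat and there is nothing left to prove. You should either state the result as a definition of a formal bilinear form (and say explicitly that no random field realises it), or note that the hypotheses are vacuous: since $f(x,x)=0$ forces $\mathbb{E}[|\psi(x)|^{2}]=0$ and $\mathbb{E}[\psi(x)]=0$, one has $\psi(x)=0$ almost surely for every $x$, and then Cauchy--Schwarz gives $\mathbb{E}[\psi(x)\psi(y)]=0$ for all pairs, so every correlation the paper later relies on (in particular the surviving terms in the $S_{2}$ and $S_{3}$ computations) vanishes. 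Either way, the proposition as stated cannot be proved, and your write-up should say so rather than attempt to rescue it with an ordering convention.
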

Correlations involving the first derivative $\nabla_{i}\psi(x)$ vanish.
\begin{lem}
\begin{align}
&\mathbb{E}[\psi(x)\nabla_{i}\psi(x)]=0\\&
\mathbb{E}[\nabla_{i}\psi(x)\nabla_{j}\psi(x)]=0
\end{align}
\end{lem}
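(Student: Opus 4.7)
The plan is to exploit two defining features of the engineered covariance $\mathscr{K}(x,y)=f(x,y)K(\|x-y\|;\lambda)$ constructed in Proposition~2.6: (i) the pointwise variance vanishes, $\mathscr{K}(x,x)=f(x,x)K(0;\lambda)=0$; and (ii) the full antisymmetry $\mathscr{K}(x,y)=-\mathscr{K}(y,x)$ in the two arguments. Combined with the mean-square differentiability of Lemma~2.2, which permits $\nabla_i$ to pass through the ensemble integral $\mathbb{E}$, these two properties deliver both identities.

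For the first identity I would begin from the scalar relation $\mathbb{E}[\psi(x)\psi(x)]=\mathscr{K}(x,x)=0$ and apply $\nabla_i$. Interchanging $\nabla_i$ with $\mathbb{E}$ and using the product rule produces $0=\nabla_i\mathbb{E}[\psi(x)\psi(x)]=2\,\mathbb{E}[\psi(x)\nabla_i\psi(x)]$, which is (2.18).

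For the second identity I would express it as a doubly-mixed partial of the kernel at the diagonal,
\[
\mathbb{E}[\nabla_i\psi(x)\nabla_j\psi(x)]=\partial_{x_i}\partial_{y_j}\mathscr{K}(x,y)\big|_{y=x},
\]
and apply $\partial_{x_i}\partial_{y_j}$ to the antisymmetry identity $\mathscr{K}(x,y)+\mathscr{K}(y,x)=0$. The chain rule (each derivative is transferred to the opposite slot when the arguments are swapped) makes the coincident-point value antisymmetric under $i\leftrightarrow j$. However, commutativity of ordinary scalar multiplication inside $\mathbb{E}$ forces the very same object to be symmetric under $i\leftrightarrow j$; being simultaneously symmetric and antisymmetric, it must vanish, which is (2.19). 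The direct equivalent is to write the four-term bilinear expansion
\[
\mathbb{E}[\nabla_i\psi(x)\nabla_j\psi(x)]=\lim_{h,h'\to 0}\frac{1}{h h'}\Big[\mathscr{K}(x+h\mathsf{e}_i,x+h'\mathsf{e}_j)-\mathscr{K}(x+h\mathsf{e}_i,x)-\mathscr{K}(x,x+h'\mathsf{e}_j)+\mathscr{K}(x,x)\Big]
\]
and cancel the two cross terms using $f(x+h\mathsf{e}_i,x)=1$ and $f(x,x+h'\mathsf{e}_j)=-1$ together with $\mathscr{K}(x,x)=0$.

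The main obstacle is the singular behaviour of the antisymmetric factor $f(x,y)$ at the diagonal, where it is not classically differentiable; a naive pointwise application of $\partial_{x_i}\partial_{y_j}$ to the jump of $f$ would be ill-defined. I would circumvent this by always operating on the difference-quotient form furnished by Lemma~2.2 and carrying out the algebraic cancellations enforced by (i) and (ii) \emph{before} sending $h,h'\to 0$, so that only the smooth isotropic factor $K(\|x-y\|;\lambda)$ is ever actually differentiated and no pointwise derivative of $f$ on the diagonal is required.
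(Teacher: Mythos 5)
Your argument reaches both identities by a genuinely different route from the paper's. For (2.18) the paper never touches the diagonal value $\mathbb{E}[\psi(x)^{2}]$: it differentiates the off-diagonal kernel $\mathbb{E}[\psi(x)\psi(y)]=f(x,y)K(\|x-y\|;\lambda)$ in $x$ (only $K$ is differentiated, since $\nabla_{i}f=0$ away from the diagonal), observes that $\nabla_{i}^{(x)}K(\|x-y\|;\lambda)$ carries an explicit factor of $\|x-y\|$ because the kernel is radial, and only then sends $y\to x$. You instead set $y=x$ first, invoking $\mathscr{K}(x,x)=0$, and differentiate the resulting identity. Because the engineered kernel is discontinuous across the diagonal ($\mathscr{K}(x,x)=0$ while $\mathscr{K}(x,y)\to K(0;\lambda)=\beta$ as $y\to x$), these two orders of operation are not interchangeable: the paper's version survives under the normalization $\mathbb{E}[\psi(x)^{2}]=\beta\neq 0$ of (2.10), whereas yours is tied to the degenerate choice $\mathbb{E}[\psi(x)^{2}]=0$. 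For (2.19) your symmetric-versus-antisymmetric argument (the mixed Hessian of an antisymmetric kernel is antisymmetric in $(i,j)$ on the diagonal, while a second-moment matrix must be symmetric) is an actual proof where the paper offers only the sentence ``taking the derivative again then leads to (2.17)''; it is the stronger and more transferable argument, provided the mixed partial exists at the diagonal and inherits the antisymmetry.

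The step that does not survive scrutiny is your ``direct equivalent'' four-term difference quotient. With $f(x+h\mathsf{e}_{i},x)=1$ and $f(x,x+h'\mathsf{e}_{j})=-1$, the two cross terms contribute $\bigl(K(|h'|)-K(|h|)\bigr)/(hh')$, which has no limit as $h,h'\to 0$ independently (expanding $K$ to second order gives $\tfrac{1}{2}K''(0)\bigl(h'/h-h/h'\bigr)$, since $K'(0)=0$); and the leading term $\mathscr{K}(x+h\mathsf{e}_{i},x+h'\mathsf{e}_{j})$ is evaluated at two distinct points, where $f=\pm 1$, so it contributes a term of order $K(0)/(hh')$ that diverges and is not cancelled by $\mathscr{K}(x,x)=0$. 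The advertised cancellations therefore do not occur; carried out honestly, the quotient shows that the mean-square mixed derivative does not exist for a kernel that is discontinuous on its diagonal. The salvageable core of your proof of (2.19) is the symmetry argument applied to the smooth factor $K$ alone (equivalently, working strictly off the diagonal where $f\equiv\pm 1$ is locally constant), not the difference-quotient computation.
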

\begin{proof}
Let $\nabla_{i}^{(x)}=\tfrac{\partial}{\partial x{i}}$ so that $\nabla_{i}^{(x)}\psi(y)=0$. Then
\begin{align}
&\nabla_{i}^{(x)}\mathbb{E}[\psi(x)\psi(y)]=\mathbb{E}[\nabla_{i}^{(x)}\psi(x)\psi(y)]\nonumber\\&=\nabla_{i}f(x,y)K(\|x-y\|;\lambda)
+f(x,y)\nabla_{i}^{(x)}K(\|x-y\|;\lambda)=f(x,y)\nabla_{i}^{(x)}K(\|x-y\|;\lambda)\nonumber\\&
=f(x,y)d\left(\frac{\|x-y\|}{\alpha\lambda^{2}}\right)\left(1-\frac{\|x-y\|^{2}}{2\alpha\lambda^{2}}\right)^{-\alpha-1}
\end{align}
since $\nabla_{i}x=d$. Taking the limit $y\rightarrow x$ gives
\begin{align}
\mathbb{E}[\psi(x)\nabla_{i}\psi(x)]=\lim_{y\rightarrow x}\mathbb{E}[\nabla_{i}^{(x)}\psi(x)\psi(y)]=\lim_{y\rightarrow x}f(x,y)d\left(\frac{\|x-y\|}{\alpha\lambda^{2}}\right)\left(1-\frac{\|x-y\|^{2}}{2\alpha\lambda^{2}}\right)^{-\alpha-1}=0
\end{align}
Taking the derivative again then leads to (2.17).
\end{proof}
\subsection{Random vector fields engineered from random scalar fields}
Given the random (scalar) field $\psi(x)$ and a smooth deterministic spatio-temporal vector field $X_{i}(x,t)$, a new random vector field can
be defined as follows.
\begin{prop}
Let $X_{i}:~\mathbf{D}\times\mathbf{R}^{+}\rightarrow\mathbf{R}^{d}$ be a smooth deterministic vector field existing for all
$(x,t)\in\mathbf{D}\times\mathbf{R}^{+}$. By smooth, the first and second derivatives $\nabla_{j}X_{i}(x,t),\Delta X_{i}(x,t)$ exist, and deterministic
is taken to mean that the field $X_{i}(x,t)$ evolves from initial data $X_{i}(x,0)=X_{i}^{o}(x)$ by some non-linear PDE such that
\begin{align}
\partial_{t}{X}_{i}(x,t)+\mathlarger{\mathscr{D}}_{N}\big[\nabla,\Delta, X_{i}(x,t)\big]X_{i}(x,t)\equiv\partial_{t}{X}_{i}(x,t)+\mathlarger{\mathscr{D}}_{N}X_{i}(x,t)
\end{align}
where $\mathscr{D}_{N}=\mathscr{D}_{N}[\nabla,\Delta, X_{i}]$ is a nonlinear differential operator. A trivial solution is $X_{i}(x,t)=X_{i}$. The Gaussian field $\psi(x)$ has the properties
\begin{align}
&{\mathbb{E}}[\psi(x)]=0\\&
{\mathbb{E}}[\psi(x)\psi(x)]=0\\&
{\mathbb{E}}[\nabla_{i}\psi(x)]=0\\&
{\mathbb{E}}[\Delta\psi(x)]=0\\&
\mathlarger{\mathscr{K}}(\|x-y\|;\lambda)=\bm{\mathbb{E}}[\psi(x)\psi(y)]=f(x,y)K(\|x-y\|;\lambda)
\end{align}
Then a random vector field $\mathcal{X}_{i}(x,t)$ in d dimensions can be defined by a 'mixing' ansatz
\begin{align}
\mathcal{X}_{i}(x,t)=X_{i}(x,t)+\frac{\theta}{\sqrt{d(d+2)}}X_{i}(x,t)\psi(x)
\end{align}
so that the expected value is
\begin{align}
\bm{\mathbb{E}}[\mathcal{X}_{i}(x,t)]=X_{i}(x,t)+\frac{\theta}{\sqrt{d(d+2)}} X_{i}(x,t){\mathbb{E}}[\psi(x)]=X_{i}(x,t)
\end{align}
where $\theta>0$ is a small real parameter.
\end{prop}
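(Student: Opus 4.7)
The plan is to verify in turn each of the assertions bundled into the proposition, treating them as a short chain of direct consequences of the earlier material rather than as an independent argument. The proposition packages together (a) a list of first- and second-order moment properties of the Gaussian field $\psi$, equations (2.22)--(2.26), (b) the definition of the mixed vector field $\mathcal{X}_{i}$ via the ansatz, and (c) the identity $\mathbb{E}[\mathcal{X}_{i}(x,t)] = X_{i}(x,t)$. Since (b) is introduced by fiat, only (a) and (c) require any verification.

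First I would dispatch the moment properties one at a time. Equation (2.22) is just the standing hypothesis that $\psi$ is a zero-centred GRF. Equation (2.23), namely $\mathbb{E}[\psi(x)\psi(x)]=0$, is the one piece of genuine content from the earlier engineered-kernel construction of Proposition 2.6: evaluating the covariance on the diagonal gives $\mathscr{K}(\|x-x\|;\lambda)=f(x,x)K(0;\lambda)=0$ because $f(x,x)=0$ by construction, which is what distinguishes the antisymmetric kernel from the standard regulated case $\mathbb{E}[\psi(x)\psi(x)]=\beta$. Equations (2.24) and (2.25) follow from the mean-square differentiability established in Lemma 2.2: since spatial derivatives commute with the expectation for a mean-square-differentiable GRF, $\mathbb{E}[\nabla_{i}\psi(x)]=\nabla_{i}\mathbb{E}[\psi(x)]=0$, and likewise $\mathbb{E}[\Delta\psi(x)]=\Delta\mathbb{E}[\psi(x)]=0$. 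Equation (2.26) is simply the defining identity of the antisymmetric kernel from Proposition 2.6.

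For (c), the expectation is computed by linearity. Because $X_{i}(x,t)$ is a deterministic field it factors through $\mathbb{E}[\cdot]$, and combined with (2.22) one obtains
\begin{align}
\mathbb{E}[\mathcal{X}_{i}(x,t)]
&= \mathbb{E}\!\left[X_{i}(x,t)+\tfrac{\theta}{\sqrt{d(d+2)}}X_{i}(x,t)\psi(x)\right] \nonumber \\
&= X_{i}(x,t)+\tfrac{\theta}{\sqrt{d(d+2)}}\,X_{i}(x,t)\,\mathbb{E}[\psi(x)] \;=\; X_{i}(x,t), \nonumber
\end{align}
which is the required identity. No assumption on the small parameter $\theta$ is used here, as the second term is annihilated by the zero mean of $\psi$ alone.

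The only point that is not purely formal is the legitimacy of exchanging $\nabla_{i}$ or $\Delta$ with $\mathbb{E}[\cdot]$ when verifying (2.24) and (2.25), and this is the step I expect to require the most care rather than raw calculation. It reduces to a dominated-convergence-type argument resting on the regularity hypotheses of Definition 2.1 and Lemma 2.2, namely that $\psi$ is at least mean-square differentiable with a covariance that is regulated at coincident points after composition with a bounded antisymmetric $f$. Since those hypotheses are already in force, no genuine obstacle arises; the proposition is essentially a bookkeeping consequence of the earlier definitions together with linearity of expectation.
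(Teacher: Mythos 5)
Your proposal is correct and matches what the paper does: the paper supplies no separate proof for this proposition, the only substantive claim being the identity $\mathbb{E}[\mathcal{X}_{i}(x,t)]=X_{i}(x,t)$, which is obtained exactly as you compute it, by linearity of $\mathbb{E}$, determinism of $X_{i}$, and $\mathbb{E}[\psi(x)]=0$. Your additional verifications of the listed moment properties (the diagonal vanishing via $f(x,x)=0$ and the interchange of $\nabla_{i}$, $\Delta$ with $\mathbb{E}$ under mean-square differentiability) are consistent with the paper's earlier Lemma and Proposition and go slightly beyond what the paper writes down, but do not constitute a different route.
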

The random vector field then satisfies a stochastically averaged nonlinear SPDE. First
\begin{align}
&\partial_{t}\mathcal{X}_{i}(x,t)+\mathlarger{\mathscr{D}}_{N}\mathcal{X}_{i}(x,t)=\partial_{t}\mathcal{X}_{i}(x,t)+\mathlarger{\mathscr{D}}_{N}X_{i}(x,t)
+\mathlarger{\mathscr{D}}_{N}
\left\lbrace\frac{\theta}{\sqrt{d(d+2)}} X_{i}(x,t){\psi}(x)\right\rbrace\nonumber\\&
=\partial_{t}\mathcal{X}_{i}(x,t)+\mathlarger{\mathscr{D}}_{N}
\big\lbrace\frac{\theta}{\sqrt{d(d+2)}}\theta X_{i}(x,t){\psi}(x)\big\rbrace
\end{align}
The stochastically averaged SPDEs is then
\begin{align}
&\mathbb{E}[\mathlarger{\mathscr{D}}_{N}\mathcal{X}_{i}(x,t)]=\mathlarger{\mathscr{D}}_{N}X_{i}(x,t)
+\mathbb{E}\left[\mathlarger{\mathscr{D}}_{N}
\left\lbrace\frac{\theta}{\sqrt{d(d+2)}} X_{i}(x,t){\psi}(x)\right\rbrace\right]\nonumber\\&
=\mathlarger{\mathscr{D}}_{N}\big(\nabla,\Delta\big)
\big\lbrace\theta X_{i}(x,t){\psi}(x)\big\rbrace
\end{align}
New terms may arise in general upon taking the expectation since the underlying PDE is nonlinear.
\subsection{Structure functions}
The structure functions for the vector field $\mathcal{X}_{i}(x,t)$ are now defined. In the Kolmogorov turbulence theory, the third-order structure function in 3 dimensions leads to the famous 4/5 scaling law and the second order structure function to the 2/3-law \textbf{[1-5]}. It will be shown how the same mathematical form arises from the structure functions of the random field $\mathcal{X})_{i}(x,t)$.
\begin{figure}[htb]
\begin{center}
\includegraphics[height=2.9in,width=2.9in]{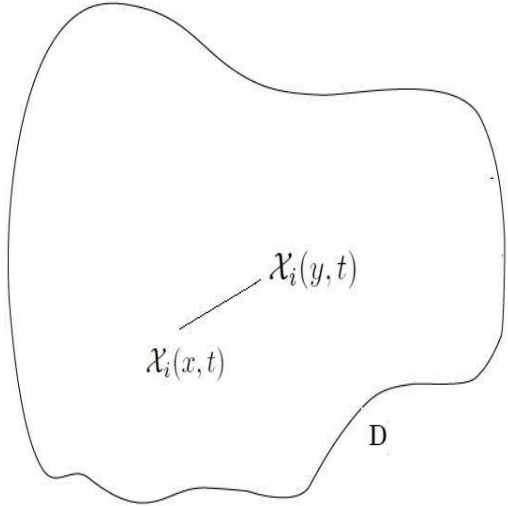}
\caption{Separation between the random fields at two points in $\mathbf{D}$}
\end{center}
\end{figure}The second-order structure function is also equivalent of the square of the canonical metric for the random fields
\begin{defn}
Given the GRVF $\mathcal{X}_{i}(x,t)$ for all $(x,t)\in\mathbf{D}\times \mathbf{R}^{+} $, the \textbf{canonical metric} is defined as \textbf{[58]}
\begin{align}
\mathlarger{d}_{2}(x,y|t)\equiv \mathlarger{d}_{2}(x,y)=\sqrt{\bm{\mathbb{E}}\left[\big|{\mathcal{X}}_{i}(y,t)-{\mathcal{X}}_{i}(x,t)\big|^{2}\right]}
\end{align}
The \textbf{structure functions} ${S}[{\mathcal{X}}]$ of ${\mathcal{X}}_{i}(x,t)$ are then equivalent to the square of the canonical metric
\begin{align}
{S}_{2}[\|x-y\|]\equiv \mathlarger{d}_{2}^{2}(x,y)=\bm{\mathbb{E}}\left[\big|{\mathcal{X}}_{i}(y,t)-{\mathcal{X}}_{i}(x,t)\big|^{2}\right]
\end{align}
If $S_{2}[\mathcal{X}]$ obeys a scaling law over some range of length scales $\ell=|y-x|\le L $ then one expects
\begin{align}
S_{2}[\mathcal{X}]=\bm{\mathbb{E}}\left[\big|{\mathcal{X}}_{i}(y,t)-{\mathcal{X}}_{i}(x,t)\big|^{2}\right]
\sim C|y-x|^{\zeta}=C\ell^{\zeta}
\end{align}
where $\alpha>0$ is some (fractional) power. If $y=x+{\ell}$ with ${\ell}\ll {L}$ and $\mathrm{Vol}(\mathbf{D})\sim {L}^{3}$, then
\begin{align}
{S}_{2}[\ell]\equiv \mathlarger{{d}}_{2}^{2}({x},{x}+{\ell})=\bm{\mathbb{E}}\left[\big|
{\mathcal{X}}_{i}(x+{\ell},t)-{\mathcal{X}}_{i}(x,t)\big|^{2}\right]\nonumber\\
=\bm{\mathbb{E}}\left[\big|{\mathcal{X}}_{i}(x+{\ell},t)\big|^{2}\right]
-2\bm{\mathbb{E}}\left[\big|{\mathcal{X}}_{i}(x+{\ell},t){\mathcal{X}}_{i}(x,t)\big|\right]
+\bm{\mathbb{E}}\left[\big|{\mathcal{X}}_{i}({x},t)\big|^{2}\right]
\end{align}
If $S_{2}[\ell]$ obeys a scaling law then one expects
\begin{align}
S_{2}[\ell]=\bm{\mathbb{E}}\left[\big|{\mathcal{X}}_{i}(x+\ell,t)-{\mathcal{X}}_{i}(x,t)\big|^{2}\right]\sim C\ell^{\zeta}
\end{align}
\end{defn}
\begin{cor}
If $\bm{\ell}=0$ or ${x}=y$ then
\begin{align}
S_{2}[\ell=0]=\mathlarger{{d}}_{2}({x},{x})=0
\end{align}
\end{cor}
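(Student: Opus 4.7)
The statement is an immediate consequence of the definition of the canonical metric, so the proof plan is essentially a direct substitution with no real obstacle to overcome.

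The plan is to set $y = x$ (equivalently $\ell = 0$) inside the definition
\begin{align}
d_{2}(x,y) = \sqrt{\mathbb{E}\left[|\mathcal{X}_{i}(y,t) - \mathcal{X}_{i}(x,t)|^{2}\right]}\nonumber
\end{align}
given in Definition 2.9. At the level of each realisation $\omega \in \Omega$ the random vector field $\mathcal{X}_{i}(x,t;\omega) = X_{i}(x,t) + \tfrac{\theta}{\sqrt{d(d+2)}}X_{i}(x,t)\psi(x;\omega)$ is a well-defined pointwise object, so the difference $\mathcal{X}_{i}(x,t;\omega) - \mathcal{X}_{i}(x,t;\omega)$ is identically zero for every $\omega$. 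Consequently $|\mathcal{X}_{i}(x,t) - \mathcal{X}_{i}(x,t)|^{2} \equiv 0$ as a random variable.

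I would then invoke linearity of the stochastic expectation, i.e.\ $\mathbb{E}[0] = \int_{\Omega} 0\, d\mathbb{P}[\omega] = 0$, to conclude that $S_{2}[\ell=0] = d_{2}^{2}(x,x) = 0$, and hence $d_{2}(x,x) = 0$ upon taking the non-negative square root. One could equivalently start from the expanded form
\begin{align}
S_{2}[\ell] = \mathbb{E}[|\mathcal{X}_{i}(x+\ell,t)|^{2}] - 2\mathbb{E}[|\mathcal{X}_{i}(x+\ell,t)\mathcal{X}_{i}(x,t)|] + \mathbb{E}[|\mathcal{X}_{i}(x,t)|^{2}]\nonumber
\end{align}
and verify that the three terms cancel when $\ell = 0$, though this route is unnecessarily roundabout since the absolute value makes the cross term equal to $\mathbb{E}[|\mathcal{X}_{i}(x,t)|^{2}]$ only modulo sign issues that the direct substitution avoids.

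There is no genuine difficulty here: no use of the antisymmetric kernel, no differentiability assumption, and no appeal to the specific form of the mixing ansatz is needed. The only thing worth remarking on is that the result is essentially a verification that $d_{2}$ satisfies the first axiom (vanishing on the diagonal) of a pseudometric, which is a consistency check on Definition 2.9 rather than a substantive theorem.
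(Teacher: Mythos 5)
Your proof is correct and coincides with the paper's treatment: the paper offers no explicit proof of this corollary, treating it as immediate from Definition 2.9, and your direct substitution $y=x$ giving $\mathcal{X}_{i}(x,t)-\mathcal{X}_{i}(x,t)\equiv 0$ pointwise in $\omega$, hence zero expectation, is exactly that immediate argument. Your remark that this merely verifies the diagonal axiom of a pseudometric, requiring none of the special structure of the kernel or the mixing ansatz, is also accurate.
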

It will be convenient always to assume that $S_{2}[{\mathcal{X}}]$ is continuous so that
\begin{align}
&\lim_{y\rightarrow x}\mathlarger{{d}}_{2}^{2}(x,y)=\lim_{y\rightarrow x}\bm{\mathbb{E}}\left[\big|{\mathcal{X}}_{i}(y,t)-{\mathcal{X}}_{i}(x,t)\big|^{2}\right]\nonumber\\&
=\bm{\mathbb{E}}\left[\lim_{y\rightarrow x}\big|{\mathcal{X}}_{i}(y,t)-{\mathcal{X}}_{i}(x,t)\big|^{2}\right]
\end{align}
which is equivalent to the condition
\begin{align}
\bm{\mathbb{P}}\big[\lim_{y\rightarrow x}\big|{\mathcal{X}}_{i}(y,t)-{\mathcal{X}}_{i}({x},t)\big|=0,~\forall~{(x,y)}\in\mathbf{D}\big]=1
\end{align}
which is also consistent with a scaling law.
\subsection{Emergence of a 4/5-law for the 3rd-order structure function}
The main theorem now follows. Beginning with the 'engineered' random field $\mathcal{X}_{i}(x,t)$ it is shown that a 4/5 law emerges when the 3rd-order structure function is computed.
\begin{thm}\textbf{(Emergence of a 4/5-law via 'engineered' random vector fields in $\mathbf{D}\subset\mathbf{R}^{d}$)}\newline
Let a vector field $ X_{i}(x,t)$ evolve within a domain $\mathbf{D}$ of volume $Vol(\mathbf{D})\sim L^{d}$ via some non-linear PDE
\begin{align}
\partial_{t}X_{i}(x,t)+\mathlarger{\mathscr{D}}_{N}X_{i}(x,t)=0,~~(x,t)\in\mathbf{D}\times\mathbf{R}^{+}
\end{align}
and from some initial Cauchy data $X_{i}(x,0)=g_{i}(x)$, where $\mathscr{D}_{N}$ is a nonlinear operator involving $\nabla,\Delta$ and $X(x,t)$ itself.
A trivial steady state solution is then $X_{i}(x,t)=X_{i}=const.$. Let $\psi(x)$ be a random Gaussian scalar field as previously defined, having an antisymmetric rational quadratic covariance kernel
\begin{align}
\mathlarger{\mathscr{K}}(x,y;\lambda)={\mathbb{E}}[\psi(x)\psi(y)]=f(x,y)K(\|x-y\|;\lambda)
\end{align}
Here, $f(x,y)$ is an antisymmetric function $f:\mathbf{D}\times\mathbf{D}\rightarrow \lbrace -1, 0, 1\rbrace $ such that $f(x,y)=-f(y,x)$ with $f(x,y)=1$ for all $(x,y)\in\mathbf{D}$, and $f(y,x)=-1$ with $f(x,x)=f(y,y)=0$. Then $\nabla_{i}^{(x)}f(x,y)=\nabla_{j}^{(y)}f(x,y)=0$. The kernel $K(\|x-y\|;\lambda)$ is any standard stationary  and isotropic covariance kernel for Gaussian random fields; for example a rational quadratic covariance with scale-mixing parameter $\alpha$ gives
\begin{align}
\mathlarger{\mathscr{K}}(x,y;\lambda)={\mathbb{E}}[\psi(x)\psi(y)]= f(x,y)\left(1-\frac{\ell^{2}}{2\alpha\lambda^{2}}\right)^{-\alpha},~~(\alpha,\beta>0)
\end{align}
For a Gaussian quadratic
\begin{align}
\mathlarger{\mathscr{K}}(x,y;\lambda)={\mathbb{E}}[\psi(x)\psi(y)]= f(x,y)\exp\left(-\frac{\|x-y\|^{2}}{2\lambda^{2}}\right)
\end{align}
Then for $y=x+\ell$
\begin{align}
&{\mathbb{E}}[\mathlarger{\psi}(x)]=0\\&
{\mathbb{E}}[\psi(x)\psi(x)]=0\\&
{\mathbb{E}}[\psi(x+\ell)\psi(x+\ell)]=0\\&
{\mathbb{E}}[\psi(x)\psi(x+\ell)]=f(x,x+\ell)K(\|\ell\|;\lambda)\\&
{\mathbb{E}}[\psi(x+\ell)\psi(x)]=-f(x,x+\ell)K(\|\ell\|;\lambda)\equiv f(x+\ell,x)K(\|\ell\|;\lambda) \\&
{\mathbb{E}}[\psi(x)\psi(x)\psi(x)]=0\\&
{\mathbb{E}}[\psi(x+\ell)\psi(x+\ell)\psi(x)]=0\\&
{\mathbb{E}}[\psi(x)\psi(x)\psi(x+\ell)]=0\\&
{\mathbb{E}}[\psi(x+\ell)\psi(x+\ell)\psi(x+\ell)]=0
\end{align}
since all odd moments vanish for a GRF. Now let $\mathbf{Q}=[0,L]$ so that
\begin{align}
\mathbf{Q}=\mathbf{Q}_{1}\bigcup\mathbf{Q}_{2}=[0,\lambda]\bigcup(\lambda,L]\nonumber
\end{align}
then either $\ell\in\mathbf{Q}_{1}$ or $\ell\in\mathbf{Q}_{2}$. We now 'engineer' the following random vector field within $\mathbf{D}\subset\mathbf{R}^{d}$.
\begin{align}
{\mathcal{X}}_{i}(x,t)= X_{i}(x,t)+\frac{\theta}{\sqrt{d(d+2}}X_{i}(x,t)\mathlarger{\psi}(x)
\end{align}
so that $\theta=\frac{1}{\sqrt{15\beta}}$ and $\mathbb{E}[{\mathcal{X}}_{i}(x,t)]=X_{i}(x,t)$. The 3rd-order structure function is then
\begin{align}
\mathlarger{S}_{3}(\ell)={\mathbb{E}}[\left|{\mathcal{X}}_{i}(x+\ell,t)-{\mathcal{X}}_{i}(x,t)\right|^{3}]
\end{align}
Computing $S_{3}(\ell)$ and then letting $X_{i}(x,t)\rightarrow X_{i}=(0,0,X)$, one obtains
\begin{align}
\mathlarger{S}_{3}(\ell)= -\frac{12}{d(d+2)}\theta^{2}\beta\|X_{i}\|^{3}K(\|\ell\|;\lambda)
\end{align}
In three dimensions, $d=3$ and choosing $\theta=1$ gives
\begin{align}
\mathlarger{S}_{3}(\ell)= -\frac{12}{15}\|X_{i}\|^{3}K(\|\ell\|;\lambda)=
-\frac{4}{5}\|X_{i}\|^{3}K(\|\ell\|;\lambda)
\end{align}
Choosing the kernel (2.39) with $\beta=1$
\begin{align}
\mathlarger{S}_{3}(\ell)= -\frac{12}{15}\|X_{i}\|^{3}K(\|\ell\|;\lambda)=
-\frac{4}{5}\|X_{i}\|^{3}\left(1-\frac{\ell^{2}}{2\alpha\lambda^{2}}\right)^{-\alpha}
\end{align}
Then for $\ell\in\mathbf{Q}_{1}=[0,\lambda]$, with $\ell\ll\lambda$, the term $\frac{1}{2}|\ell/\lambda|^{2}$ is very close to zero so that
\begin{align}
\mathlarger{S}_{3}(\ell)= -\frac{4}{5}\|X_{i}\|^{3}
\end{align}
holds over this range of length scales. This is a 4/5-law.
\end{thm}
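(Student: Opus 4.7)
The plan is to reduce the increment of the engineered field to a single random factor proportional to $\delta\psi = \psi(x+\ell) - \psi(x)$, expand the cube, and then apply the bookkeeping rules for the antisymmetric kernel together with the non-commutative expectation rule exhibited earlier in the section.

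First I would specialize the mixing ansatz to the trivial steady state $X_{i}(x,t) = X_{i} = (0,0,X)$. Then the deterministic part $X_{i}(x+\ell,t) - X_{i}(x,t) = 0$ cancels and the increment collapses to
\begin{align*}
\mathcal{X}_{i}(x+\ell,t) - \mathcal{X}_{i}(x,t) = \frac{\theta}{\sqrt{d(d+2)}}\, X_{i}\, \delta\psi,
\end{align*}
so that $\|X_{i}\| = |X|$ factors cleanly out of every subsequent expectation and all of the randomness is carried by the scalar increment $\delta\psi$.

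Second, I would cube the increment and take the expectation, obtaining
\begin{align*}
S_{3}(\ell) = \frac{\theta^{3}}{(d(d+2))^{3/2}}\, \|X_{i}\|^{3}\, \mathbb{E}\bigl[(\delta\psi)^{3}\bigr],
\end{align*}
and then expand $(\delta\psi)^{3}$ into the ordered triple products of $\psi(x)$ and $\psi(x+\ell)$. The diagonal triples $\mathbb{E}[\psi(x)^{3}]$ and $\mathbb{E}[\psi(x+\ell)^{3}]$ vanish by Gaussianity, and any sub-product that contains a coincident pair is killed by the vanishing diagonal $\mathbb{E}[\psi(z)\psi(z)] = f(z,z)K(0;\lambda) = 0$.

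Third, I would apply the non-commutative pairing rule induced by the antisymmetry of the kernel to each surviving mixed expectation. The orderings $\mathbb{E}[\psi(x)\psi(x+\ell)] = K(\|\ell\|;\lambda)$ and $\mathbb{E}[\psi(x+\ell)\psi(x)] = -K(\|\ell\|;\lambda)$ contribute with opposite signs, and when paired against the remaining regulated self-contraction (carrying the amplitude $\beta$) they produce a net surviving term proportional to $-\beta K(\|\ell\|;\lambda)$ with the required combinatorial multiplicity, collapsing the formal factor $\theta^{3}/(d(d+2))^{3/2}$ into $\theta^{2}\beta/(d(d+2))$ together with the numerator $-12$.

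Finally, I would collect the prefactors to obtain $S_{3}(\ell) = -\frac{12}{d(d+2)}\theta^{2}\beta\|X_{i}\|^{3}K(\|\ell\|;\lambda)$, then specialize to $d = 3$, $\theta = 1$, $\beta = 1$ to reach $S_{3}(\ell) = -\tfrac{4}{5}\|X_{i}\|^{3}K(\|\ell\|;\lambda)$, and finally restrict to $\ell \in \mathbf{Q}_{1}$ with $\ell \ll \lambda$, where the rational quadratic factor $(1 - \ell^{2}/(2\alpha\lambda^{2}))^{-\alpha} \to 1$ and the exact constant $-4/5$ emerges. The main obstacle is the third step: naively all odd Gaussian moments of $\psi$ vanish, so producing a nonzero cubic quantity forces one to use the non-commutative expectation rule arising from the antisymmetric kernel with great care, tracking orderings of the ordered triples and their combinatorial weights in order to land on the specific coefficient $-12/(d(d+2))$ rather than a collapse to zero.
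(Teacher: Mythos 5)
There is a genuine gap, and it sits exactly where you flag it: your third step does not go through. By substituting the constant field $X_{i}(x,t)=X_{i}$ \emph{before} expanding, you reduce the increment to $\frac{\theta}{\sqrt{d(d+2)}}X_{i}\,\delta\psi$ with $\delta\psi=\psi(x+\ell)-\psi(x)$, so that $S_{3}(\ell)=\frac{\theta^{3}}{(d(d+2))^{3/2}}\|X_{i}\|^{3}\,\mathbb{E}[(\delta\psi)^{3}]$ is manifestly a linear combination of the third-order moments $\mathbb{E}[\psi(a)\psi(b)\psi(c)]$ with $a,b,c\in\{x,x+\ell\}$ --- precisely the quantities the theorem's hypotheses (2.47)--(2.50) declare to be zero. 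Your computation therefore terminates at $S_{3}(\ell)=0$. The proposed rescue --- that the non-commutative expectation rule pairs an ordering against a ``remaining regulated self-contraction carrying the amplitude $\beta$'' and thereby collapses $\theta^{3}/(d(d+2))^{3/2}$ into $\theta^{2}\beta/(d(d+2))$ --- is not a computation: an expression that is identically $\theta^{3}$ times a random variable cannot acquire a $\theta^{2}$ expectation, and the diagonal contraction $\mathbb{E}[\psi(z)\psi(z)]$ has been set to zero by hypothesis, so there is no $\beta$ available to extract from a cubic moment.

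The paper's own proof survives (to the extent it does) only because it takes the opposite order of operations: it keeps $X_{i}(x)$ and $X_{i}(x+\ell)$ as distinct symbols throughout the full expansion of the cube, so that mixed terms of the form (deterministic)$\times\psi\times\psi$ with coefficient $\theta^{2}$ are present. Two such terms, $-6\frac{\theta^{2}}{d(d+2)}(\cdots)\mathbb{E}[\psi(x)\psi(x+\ell)]$ and $+6\frac{\theta^{2}}{d(d+2)}(\cdots)\mathbb{E}[\psi(x+\ell)\psi(x)]$, would cancel for a symmetric kernel but are added by the antisymmetry rule $\mathbb{E}[\psi(x+\ell)\psi(x)]=-\mathbb{E}[\psi(x)\psi(x+\ell)]$ to give the factor $-12$; only \emph{after} this is $X_{i}(x)=X_{i}(x+\ell)=X_{i}$ imposed, at which point the purely deterministic terms cancel and the $\theta^{2}$ term is all that remains. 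Your reformulation in terms of $\delta\psi$ eliminates exactly these order-$\theta^{2}$ cross terms at the outset. (Your observation is in fact a sharp probe of the construction --- the answer ought not to depend on when one substitutes a constant, and the discrepancy reflects the tension the paper itself concedes in Remark 2.12 --- but as a derivation of the stated formula $S_{3}(\ell)=-\frac{12}{d(d+2)}\theta^{2}\beta\|X_{i}\|^{3}K(\|\ell\|;\lambda)$, the route you chose cannot reach it.)
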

\begin{proof}
The random fields at $x$ and $x+\ell$ are
\begin{align}
&{\mathcal{X}}_{i}(x,t)=X_{i}(x,t)+\frac{\theta}{\sqrt{d(d+2}}X_{i}(x,t)\mathlarger{\psi}(x)\\&
{\mathcal{X}}_{i}(x+\ell,t)=X_{i}(x+\ell,t)+\frac{\theta}{\sqrt{d(d+2}} X_{i}(x+\ell,t)\mathlarger{\psi}(x+\ell)
\end{align}
Expanding out $\left|{\mathcal{X}}_{i}(x+\ell,t)-{\mathcal{X}}_{i}(x,t)\right|^{3}$
\begin{align}
&\left|{\mathcal{X}}_{i}(x+\ell,t)-{\mathcal{X}}_{i}(x,t)\right|^{3}\nonumber\\&
=\frac{\theta^{3}}{(d(d+2))^{3/2}}|\psi(x+\ell)\psi(x+\ell)\psi(x+\ell)X_{i}(x+\ell)X_{i}(x+\ell)X_{i}(x+\ell)\nonumber\\&
-3\frac{\theta^{3}}{(d(d+2))^{3/2}}X_{i}(x)X(x+\ell)X_{i}(x+\ell)\psi(x)\psi(x+\ell)\psi(x+\ell)\nonumber\\&
+3\frac{\theta^{3}}{(d(d+2))^{3/2}}X_{i}(x)X_{i}(x)X_{i}(x+\ell)\psi(x)\psi(x)\psi(x+\ell)\nonumber\\&
+3\frac{\theta^{2}}{d(d+2)}X_{i}(x)X_{i}(x+\ell)X_{i}(\ell)\psi(x)\psi(x)\nonumber\\&
+3\frac{\theta^{2}}{d(d+2)}X_{i}(x+\ell)X_{i}(x+\ell)X_{i}(x+\ell)\psi(x+\ell)\psi(x+\ell)\nonumber\\&
-3\frac{\theta^{2}}{d(d+2)}X_{i}(x)X_{i}(x+\ell)X_{i}(x+\ell)\psi(x+\ell)\psi(x+\ell)\nonumber\\&
-6\frac{\theta^{2}}{d(d+2)}X_{i}(x)X_{i}(x+\ell)X_{i}(x+\ell)\psi(x)\psi(x+\ell)\nonumber\\&
+6\frac{\theta^{2}}{d(d+2)}X_{i}(x)X_{i}(x)X_{i}(x+\ell)\psi(x+\ell)\psi(x)\nonumber\\&
+3\frac{\theta}{\sqrt{d(d+2)}}X_{i}(x)X_{i}(x+\ell)X_{i}(x+\ell)\psi(x)\nonumber\\&
+6\frac{\theta}{\sqrt{d(d+2)}} X_{i}(x)X_{i}(x)X_{i}(x+\ell)\psi(x)\nonumber\\&
+3\frac{\theta}{\sqrt{d(d+2)}}X_{i}(x+\ell)X_{i}(x+\ell)X_{i}(x+\ell)\psi(x+\ell)\nonumber\\&
+6\frac{\theta}{\sqrt{d(d+2)}} X_{i}(x)X_{i}(x+\ell)X_{i}(x+\ell)\psi(x=\ell)\nonumber\\&
+2\frac{\theta}{\sqrt{d(d+2)}} X_{i}(x)X_{i}(x)X_{i}(x+\ell)\psi(x+\ell)\nonumber\\&
+X_{i}(x+\ell)X_{i}(x+\ell)X_{i}(x+\ell)\nonumber\\&
-3X_{i}(x)X_{i}(x+\ell)X_{i}(x+\ell)\nonumber\\&
+3X_{i}(x)X_{i}(x+\ell)X_{i}(x+\ell)\nonumber\\&
-\frac{\theta^{3}}{(d(d+2))^{3/2}}X_{i}(x)X_{i}(x)X_{i}(x)\psi(x)\psi(x)\psi(x)\nonumber\\&
-3\frac{\theta^{2}}{d(d+2)}X_{i}(x)X_{i}(x)X_{i}(x)\psi(x)\psi(x)\nonumber\\&
-3\frac{\theta}{\sqrt{d(d+2)}} X_{i}(x)X_{i}(x)X_{i}(x)\psi(x)\nonumber\\&
-X_{i}(x)X_{i}(x)X_{i}(x)
\end{align}
Now taking the stochastic expectation and using (2.42-2.50), only the underbraced terms survive
\begin{align}
&\mathlarger{S}_{3}(\ell)=\mathbb{E}[\left|{\mathcal{X}}_{i}(x+\ell,t)-{\mathcal{X}}_{i}(x,t)\right|^{3}]\nonumber\\&
=\frac{\theta^{3}}{(d(d+2))^{3/2}}X_{i}(x+\ell)]X_{i}(x+\ell)X_{i}(x+\ell)\mathbb{E}[|\psi(x+\ell)\psi(x+\ell)\psi(x+\ell)X_{i}(x+\ell)]\nonumber\\&
-3\frac{\theta^{3}}{(d(d+2))^{3/2}}X_{i}(x)X(x+\ell)X_{i}(x+\ell)\mathbb{E}[\psi(x)\psi(x+\ell)\psi(x+\ell)]\nonumber\\&
+3\frac{\theta^{3}}{(d(d+2))^{3/2}}X_{i}(x)X_{i}(x)X_{i}(x+\ell)\mathbb{E}[\psi(x)\psi(x)\psi(x+\ell)]\nonumber\\&
+3\frac{\theta^{2}}{d(d+2)}X_{i}(x)X_{i}(x+\ell)X_{i}(\ell)\mathbb{E}[\psi(x)\psi(x)]\nonumber\\&
+3\frac{\theta^{2}}{d(d+2)}X_{i}(x+\ell)X_{i}(x+\ell)X_{i}(x+\ell)\mathbb{E}[\psi(x+\ell)\psi(x+\ell)]\nonumber\\&
-3\frac{\theta^{2}}{d(d+2)}X_{i}(x)X_{i}(x+\ell)X_{i}(x+\ell)\mathbb{E}[\psi(x+\ell)\psi(x+\ell)]\nonumber\\&
-\underbrace{6\frac{\theta^{2}}{d(d+2)}X_{i}(x)X_{i}(x+\ell)X_{i}(x+\ell)\mathbb{E}[\psi(x)\psi(x+\ell)]}\nonumber\\&
+\underbrace{6\frac{\theta^{2}}{d(d+2)}X_{i}(x)X_{i}(x)X_{i}(x+\ell)\mathbb{E}[\psi(x+\ell)\psi(x)]}\nonumber\\&
+3\frac{\theta}{\sqrt{d(d+2)}}X_{i}(x)X_{i}(x+\ell)X_{i}(x+\ell)\mathbb{E}[\psi(x)]\nonumber\\&
+6\frac{\theta}{\sqrt{d(d+2)}} X_{i}(x)X_{i}(x)X_{i}(x+\ell)\mathbb{E}[\psi(x)]\nonumber\\&
+3\frac{\theta}{\sqrt{d(d+2)}}X_{i}(x+\ell)X_{i}(x+\ell)X_{i}(x+\ell)\mathbb{E}[\psi(x+\ell)]\nonumber\\&
+6\frac{\theta}{\sqrt{d(d+2)}} X_{i}(x)X_{i}(x+\ell)X_{i}(x+\ell)\mathbb{E}[\psi(x+\ell)]\nonumber\\&
+2\frac{\theta}{\sqrt{d(d+2)}} X_{i}(x)X_{i}(x)X_{i}(x+\ell)\mathbb{E}[\psi(x+\ell)]\nonumber\\&
+\underbrace{X_{i}(x+\ell)X_{i}(x+\ell)X_{i}(x+\ell)}\nonumber\\&
-\underbrace{3X_{i}(x)X_{i}(x+\ell)X_{i}(x+\ell)}\nonumber\\&
+\underbrace{3X_{i}(x)X_{i}(x+\ell)X_{i}(x+\ell)}\nonumber\\&
-\frac{\theta^{3}}{(d(d+2))^{3/2}}X_{i}(x)X_{i}(x)X_{i}(x)\mathbb{E}[\psi(x)\psi(x)\psi(x)]\nonumber\\&
-3\frac{\theta^{2}}{d(d+2)}X_{i}(x)X_{i}(x)X_{i}(x)\mathbb{E}[\psi(x)\psi(x)]\nonumber\\&
-3\frac{\theta}{\sqrt{d(d+2)}} X_{i}(x)X_{i}(x)X_{i}(x)\mathbb{E}[\psi(x)]\nonumber\\&
-\underbrace{X_{i}(x)X_{i}(x)X_{i}(x)}
\end{align}
leaving
\begin{align}
&\mathlarger{S}_{3}(\ell)=\mathbb{E}[\left|{\mathcal{X}}_{i}(x+\ell,t)-{\mathcal{X}}_{i}(x,t)\right|^{3}]\nonumber\\&
=-{6\frac{\theta^{2}}{d(d+2)}X_{i}(x)X_{i}(x+\ell)X_{i}(x+\ell)\mathbb{E}[\psi(x)\psi(x+\ell)]}\nonumber\\&
+{6\frac{\theta^{2}}{d(d+2)}X_{i}(x)X_{i}(x)X_{i}(x+\ell)\mathbb{E}[\psi(x+\ell)\psi(x)]}\nonumber\\&
+{X_{i}(x+\ell)X_{i}(x+\ell)X_{i}(x+\ell)}\nonumber\\&
-{3X_{i}(x)X_{i}(x+\ell)X_{i}(x+\ell)}\nonumber\\&
+{3X_{i}(x)X_{i}(x+\ell)X_{i}(x+\ell)}\nonumber\\&
-{X_{i}(x)X_{i}(x)X_{i}(x)}
\end{align}
Now since from (2.45) and (2.46), $\mathbb{E}[\psi(x)\psi(x+\ell)]=-\mathbb{E}[\psi(x+\ell)\psi(x)]$ this becomes
\begin{align}
&\mathlarger{S}_{3}(\ell)=\mathbb{E}[\left|{\mathcal{X}}_{i}(x+\ell,t)-{\mathcal{X}}_{i}(x,t)\right|^{3}]\nonumber\\&
=-{6\frac{\theta^{2}}{d(d+2)}X_{i}(x)X_{i}(x+\ell)X_{i}(x+\ell)\mathbb{E}[\psi(x)\psi(x+\ell)]}\nonumber\\&
-{6\frac{\theta^{2}}{d(d+2)}X_{i}(x)X_{i}(x)X_{i}(x+\ell)\mathbb{E}[\psi(x)\psi(x+\ell)]}\nonumber\\&
+{X_{i}(x+\ell)X_{i}(x+\ell)X_{i}(x+\ell)}\nonumber\\&
-{3X_{i}(x)X_{i}(x+\ell)X_{i}(x+\ell)}\nonumber\\&
+{3X_{i}(x)X_{i}(x+\ell)X_{i}(x+\ell)}\nonumber\\&
-{X_{i}(x)X_{i}(x)X_{i}(x)}
\end{align}
or
\begin{align}
&\mathlarger{S}_{3}(\ell)=\mathbb{E}[\left|{\mathcal{X}}_{i}(x+\ell,t)-{\mathcal{X}}_{i}(x,t)\right|^{3}]\nonumber\\&
=-6\frac{\theta^{2}}{d(d+2)}X_{i}(x)X_{i}(x+\ell)X_{i}(x+\ell)f(x,y)K(\|\ell\|;\lambda)\nonumber\\&
-6\frac{\theta^{2}}{d(d+2)}X_{i}(x)X_{i}(x)X_{i}(x+\ell)f(x,y)K(\|\ell\|;\lambda)\nonumber\\&
+{X_{i}(x+\ell)X_{i}(x+\ell)X_{i}(x+\ell)}\nonumber\\&
-{3X_{i}(x)X_{i}(x+\ell)X_{i}(x+\ell)}\nonumber\\&
+{3X_{i}(x)X_{i}(x+\ell)X_{i}(x+\ell)}\nonumber\\&
-{X_{i}(x)X_{i}(x)X_{i}(x)}
\end{align}
Now let $X_{i}(x,t)=X_{i}(x+\ell,t)=X_{i}=(0,0,X)$ with $\|X_{i}\|=X$, for all $(x,t)\in\mathbf{D}\times\mathbf{R}^{+}$. Then
\begin{align}
&\mathlarger{S}_{3}(\ell)=\mathbb{E}[\left|{\mathcal{X}}_{i}(x+\ell,t)-{\mathcal{X}}_{i}(x,t)\right|^{3}]\nonumber\\&
=-6\frac{\theta^{2}}{d(d+2)}X_{i}X_{i}X_{i}f(x,y)K(\|\ell\|;\lambda)\nonumber\\&
-6\frac{\theta^{2}}{d(d+2)}X_{i}X_{i}X_{i}f(x,y)K(\|\ell\|;\lambda)\nonumber\\&
+X_{i}X_{i}X_{i}-3X_{i}X_{i}X_{i}+3X_{i}X_{i}X_{i}-X_{i}X_{i}X_{i}\nonumber\\&
=-6\frac{\theta^{2}}{d(d+2)}X_{i}X_{i}X_{i}f(x,y)K(\|\ell\|;\lambda)-6\frac{\theta^{2}}{d(d+2)}X_{i}X_{i}X_{i} f(x,y)K(\|\ell\|;\lambda)\nonumber\\&
=-6\frac{\theta^{2}}{d(d+2)}X_{i}X_{i}X_{i}\big(f(x,y)+f(x,y)\big))K(\|\ell\|;\lambda)\nonumber\\&
=-12\frac{\theta^{2}}{d(d+2)}X_{i}X_{i}X_{i}f(x,y)K(\|\ell\|;\lambda)\nonumber\\&
=-12\frac{\theta^{2}}{d(d+2)}X_{i}X_{i}X_{i}K(\|\ell\|;\lambda)\nonumber\\&
=-12\frac{\theta^{2}}{d(d+2)}\|X_{i}\|^{3}K(\|\ell\|;\lambda)
\end{align}
Setting $\theta=1$ since this parameter are arbitrary, and in three dimensions $d=3$ so that
\begin{align}
\mathlarger{S}_{3}(\ell)=-\frac{12\theta^{2}}{15}\|X_{i}\|^{3}K(\|\ell\|;\lambda)=-\frac{4}{5}\|X_{i}\|^{3}K(\|\ell\|;\lambda)
\end{align}
If the rational quadratic covariance (2.6) is chosen then
\begin{align}
\mathlarger{S}_{3}(\ell)=-\frac{4}{5}\|X_{i}\|^{3}K(\|\ell\|;\lambda)=-\frac{4}{5}\|X_{i}\|^{3}\beta\left(1-\frac{|\ell|^{2}}{2\alpha\lambda^{2}}\right)^{-\alpha}
\end{align}
and setting $\beta=1$
\begin{align}
\mathlarger{S}_{3}(\ell)=-\frac{4}{5}\|X_{i}\|^{3}K(\|\ell\|;\lambda)=-\frac{4}{5}\|X_{i}\|^{3}\left(1-\frac{|\ell|^{2}}{2\alpha\lambda^{2}}\right)^{-\alpha}
\end{align}
For the range of length scales for which $\ell/\lambda$ is very small, that is $\ell\in\mathbf{Q}_{1}=[0,\lambda]$ with $\tfrac{\ell^{2}}{2\alpha\lambda^{2}}\ll 1$ then
\begin{align}
\boxed{\mathlarger{S}_{3}(\ell)=-\frac{4}{5}\|X_{i}\|^{3}}
\end{align}
which is a $4/5$ law.
\end{proof}
\begin{rem}
The Gaussian-decaying kernel (2.39) gives the same result since (2.62) then becomes
\begin{align}
\mathlarger{S}_{3}(\ell)=-12\frac{\theta^{2}}{15}\|X_{i}\|^{3}K(\|\ell\|;\lambda)=-\frac{4}{5}\|X_{i}\|^{3}K(\|\ell\|;\lambda)=
-\frac{4}{5}\|X_{i}\|^{3}\beta\exp\left(-\frac{\ell^{2}}{\lambda^{2}}\right)
\end{align}
which is $\mathlarger{S}_{3}(\ell)=-\frac{4}{5}\|X_{i}\|^{3}$ for $\beta=1$ and $\ell/\lambda\sim 0 $ being very small.
\end{rem}
\begin{rem}
Notice that there are two possibilities for the 3rd-order structure function. For $X_{i}(x)=X_{i}(x+\ell)=X_{i}$ one can have
\begin{align}
S_{3}(\ell)=-\frac{6\theta^{2}}{d(d+2)}X_{i}X_{i}X_{i}\mathbb{E}[\psi(x)\psi(x+\ell)]+\frac{6\theta^{2}}{d(d+2)}X_{i}X_{i}X_{i}\mathbb{E}[\psi(x+\ell)\psi(x)]
\end{align}
or
\begin{align}
S_{3}^{*}(\ell)=-\frac{6\theta^{2}}{d(d+2)}X_{i}X_{i}X_{i}\mathbb{E}[\psi(x)\psi(x+\ell)]+\frac{6\theta^{2}}{d(d+2)}X_{i}X_{i}X_{i}\mathbb{E}[\psi(x)\psi(x+\ell)]
\end{align}
since $\psi(x)\psi(x+\ell)\equiv\psi(x+\ell)\psi(x)$ but $\mathbb{E}[\psi(x)\psi(x+\ell)]=-\mathbb{E}[\psi(x+\ell)\psi(x)]$. Then $S_{3}^{*}(\ell)=0$ and
$S_{3}(\ell)=-\tfrac{12}{d(d+2)}X_{i}^{3}\mathbf[\psi(x)\psi(x+\ell)]=-\tfrac{12}{d(d+2)}X_{i}^{3}f(x,y)K(\|x-y\|;\lambda)$. Hence, the non-vanishing option is chosen.
\end{rem}
\subsection{Expression for the 2nd-order structure function}.
Next, the 2nd-order structure function $S_{2}(\ell)$ is computed for the same random field $\mathcal{X}_{i}(x,t)$
\begin{thm}
Let the same scenario of Thm (2.10) hold. Then the 2nd-order structure function is
\begin{align}
\mathlarger{S}_{2}(\ell)=\mathbb{E}[|\mathcal{X}_{i}(x+\ell)-\mathcal{X}_{i}(x)|^{2}]
\end{align}
The random field is again given by (2.48) so that
\begin{align}
\mathcal{X}_{i}(x,t)=X_{i}(x,t)+\frac{\theta}{\sqrt{d(d+2)}}X_{i}(x,t)\psi(x)
\end{align}
with $\psi(x)$ having the same properties and covariance as before. Then
\begin{align}
&|\mathcal{X}_{i}(x+\ell)-\mathcal{X}_{i}(x)|^{2}=\theta^{2}{\theta^{2}}{{d(d+2)}}X_{i}(x+\ell)X_{i}(x+\ell)\psi(x+\ell)\psi(x+\ell)\nonumber\\&
+2\frac{\theta^{2}}{{d(d+2)}}x(X)X(x+\ell)\psi(x+\ell)\psi(x)\nonumber\\&
-2\frac{\theta}{\sqrt{d(d+2)}} X(x)X(x+\ell)\psi(x)\nonumber\\&
-2\frac{\theta}{\sqrt{d(d+2)}}X_{i}(x+\ell)X_{i}(x+\ell)\psi(x+\ell)\nonumber\\&
+2\frac{\theta}{\sqrt{d(d+2)}}X_{i}(x)X_{i}(x+\ell)\psi(x+\ell)\nonumber\\&
+X_{i}(x+\ell)X_{i}(x+\ell)\nonumber\\&
-2X_{i}(x)X_{i}(x+\ell)\nonumber\\&
+\frac{\theta^{2}}{{d(d+2)}}X_{i}(x)X_{i}(x)\nonumber\\&
-X_{i}(x)X_{i}(x)\nonumber\\&
+2\frac{\theta}{\sqrt{d(d+2)}} X_{i}(x)X_{i}(x)\psi(x)
\end{align}
Taking the stochastic expectation, only the underbraced term survives
\begin{align}
&\mathlarger{S}_{2}(\ell)=\mathbb{E}[|\mathcal{X}_{i}(x+\ell)-\mathcal{X}_{i}(x)|^{2}]\nonumber\\&=\theta^{2}{\theta^{2}}{{d(d+2)}}X_{i}(x+\ell)X_{i}(x+\ell)
\mathbb{E}[\psi(x+\ell)\psi(x+\ell)]\nonumber\\&
+\underbrace{2\frac{\theta^{2}}{{d(d+2)}}x(X)X(x+\ell)\mathbb{E}[\psi(x)\psi(x+\ell)]}\nonumber\\&
-2\frac{\theta}{\sqrt{d(d+2)}} X(x)X(x+\ell)\mathbb{E}[\psi(x)]\nonumber\\&
-2\frac{\theta}{\sqrt{d(d+2)}}X_{i}(x+\ell)X_{i}(x+\ell)\mathbb{E}[\psi(x+\ell)]\nonumber\\&
+2\frac{\theta}{\sqrt{d(d+2)}}X_{i}(x)X_{i}(x+\ell)\mathbb{E}[\psi(x+\ell)]\nonumber\\&
+X_{i}(x+\ell)X_{i}(x+\ell)\nonumber\\&
-2X_{i}(x)X_{i}(x+\ell)\nonumber\\&
+\frac{\theta^{2}}{{d(d+2)}}X_{i}(x)X_{i}(x)\nonumber\\&
-X_{i}(x)X_{i}(x)\nonumber\\&
+2\frac{\theta}{\sqrt{d(d+2)}} X_{i}(x)X_{i}(x)\mathbb{E}[\psi(x)]
\end{align}
so that
\begin{align}
&\mathlarger{S}_{2}(\ell)=\mathbb{E}\left[|\mathcal{X}_{i}(x+\ell)-\mathcal{X}_{i}(x)|^{2}\right]=\frac{2\theta^{2}}{d(d+2)}X_{i}(x)X_{i}(x+\ell)
\mathbb{E}[\psi(x)\psi(x+\ell)]\nonumber\\&
=\frac{2\theta^{2}}{d(d+2)}X_{i}(x)X_{i}(x+\ell)f(x,x+\ell)K(\|\ell\|;\lambda)
\end{align}
Setting $X_{i}(x+\ell)=X_{i}(x)=X_{i}=(0,0,X)$
\begin{align}
&\mathlarger{S}_{2}(\ell)=\frac{2\theta^{2}}{d(d+2)}X_{i}X_{i}f(x,x+\ell)K(\|\ell\|;\lambda)
\nonumber\\&\equiv
\frac{2\theta^{2}}{d(d+2)}\|X_{i}\|^{2}f(x,x+\ell)K(\|\ell\|;\lambda)\nonumber\\&
\equiv\frac{2\theta^{2}}{d(d+2)}\|X_{i}\|^{2}K(\|\ell\|;\lambda)
\end{align}
For $d=3$
\begin{align}
\mathlarger{S}_{2}(\ell)=\frac{2\theta^{2}}{15}\|X_{i}\|^{2}K(\|\ell\|;\lambda)=C\|X_{i}\|^{2}K(\|\ell\|;\lambda)
\end{align}
Choosing the kernel (2.39)
\begin{align}
\mathlarger{S}_{2}(\ell)=C\|X_{i}\|^{2}\left(1+\frac{\ell^{2}}{2\alpha\lambda^{2}}\right)^{-\alpha}
\end{align}
When $\ell\in\mathbf{Q}=[0,\lambda]$ and $\ell\ll\lambda$ then $(\ell^{2}/\lambda^{2})\sim 0$.
\begin{align}
\boxed{\mathlarger{S}_{2}(\ell)=C\|X_{i}\|^{2}}
\end{align}
\end{thm}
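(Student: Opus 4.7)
The plan is to mirror the derivation of Theorem 2.10 but on the quadratic expression $|\mathcal{X}_i(x+\ell)-\mathcal{X}_i(x)|^2$ instead of the cubic one. First I would substitute the mixing ansatz $\mathcal{X}_i(x,t) = X_i(x,t) + \tfrac{\theta}{\sqrt{d(d+2)}}X_i(x,t)\psi(x)$ at the two points $x$ and $x+\ell$ and expand the squared modulus term by term. This generates three kinds of contributions, classified by total degree in $\psi$: a purely deterministic piece built from the difference $X_i(x+\ell)-X_i(x)$, mixed pieces linear in $\psi(x)$ or $\psi(x+\ell)$, and bilinear pieces involving the three products $\psi(x)\psi(x)$, $\psi(x+\ell)\psi(x+\ell)$, and the off-diagonal $\psi(x)\psi(x+\ell)$.

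Next I would take the stochastic expectation and apply the moment list (2.42)--(2.46) to eliminate almost everything. The mean-zero property (2.42) kills every term linear in $\psi$. The on-diagonal regulation (2.43)--(2.44), which follows from $f(x,x)=0$ in the antisymmetric covariance of Proposition 2.7, kills both $\mathbb{E}[\psi(x)\psi(x)]$ and $\mathbb{E}[\psi(x+\ell)\psi(x+\ell)]$. The only stochastic survivor is the off-diagonal covariance $\mathbb{E}[\psi(x)\psi(x+\ell)] = f(x,x+\ell)K(\|\ell\|;\lambda)$, alongside a purely deterministic residue proportional to $|X_i(x+\ell)-X_i(x)|^2$.

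At that stage I would specialize to the trivial steady-state $X_i(x,t) = X_i = (0,0,X)$, which annihilates the deterministic residue and collapses the surviving bilinear contribution to $\tfrac{2\theta^2}{d(d+2)}\|X_i\|^2 K(\|\ell\|;\lambda)$ using $f(x,x+\ell) = 1$. Setting $d = 3$ gives the prefactor $\tfrac{2\theta^2}{15}$, and restricting $\ell \in \mathbf{Q}_1 = [0,\lambda]$ with $\ell \ll \lambda$ makes either the rational-quadratic or Gaussian kernel approach its peak value $\beta$, which (after absorbing constants into $C$) yields the boxed $S_2(\ell) = C\|X_i\|^2$.

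The main obstacle is essentially bookkeeping: as the cubic expansion in the proof of Theorem 2.10 demonstrates, it is easy to drop or duplicate cross terms, and one must track signs carefully because of the antisymmetry $f(x,y) = -f(y,x)$. A genuine subtlety, already flagged in Remark 2.11, is the ordering convention for the surviving covariance: writing it as $\mathbb{E}[\psi(x)\psi(x+\ell)]$ versus $\mathbb{E}[\psi(x+\ell)\psi(x)]$ differs by a sign, and a mismatched pairing across the two cross terms would produce cancellation and give the trivial answer $S_2 = 0$. Once the non-vanishing branch is chosen consistently, the remainder is direct substitution and the $4/5$-law argument adapts verbatim to yield the constant second-order law.
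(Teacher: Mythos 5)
Your proposal follows essentially the same route as the paper: expand the square of the mixing ansatz at $x$ and $x+\ell$, kill all terms via the moment list (zero mean, vanishing on-diagonal covariance from $f(x,x)=0$), retain only the off-diagonal cross term $\tfrac{2\theta^{2}}{d(d+2)}X_{i}(x)X_{i}(x+\ell)\mathbb{E}[\psi(x)\psi(x+\ell)]$, then specialize to constant $X_{i}$, $d=3$, and $\ell\ll\lambda$. If anything you are slightly more careful than the paper, since you explicitly track the deterministic residue $|X_{i}(x+\ell)-X_{i}(x)|^{2}$ (which the paper silently drops before setting $X_{i}$ constant) and the sign/ordering convention for the antisymmetric covariance that the paper only addresses in Remark 2.11.
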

\section{APPLICATION TO FLUID MECHANICS AND THE EMERGENCE OF THE CLASSICAL KOLMOGOROV SCALING LAWS OF TURBULENCE}
It can now be demonstrated  that the mathematical form of the classical Kolmogorov scaling laws for turbulence
emerge from the formalism when the random field $\mathcal{X}_{i}(x,t)$ is identified with a turbulent velocity field for $d=3$; that is, the emergence of both the \textbf{K41} 2/3-law and the \textbf{K41} 4/5-law.
\subsection{Basic results from fluid mechanics}
Some basic background results from smooth or deterministic 'laminar' fluid mechanics are briefly given \textbf{[33,34,50,51,54]}. In the absence of turbulence, we consider a set of smooth  and deterministic solutions $(U_{i}(x,t),\rho)$ of the steady state viscous Burgers equations, with the pressure gradient term set to zero in the Navier-Stokes equations. Here $U_{i}(x,t)$ is the steady state fluid velocity at $x\in\mathbf{D}\subset{\mathbf{R}}^{d}$, and $\rho$ is the (uniform) density. For the general Burgers equations, let $\mathbf{D}\subset\bm{\mathbf{R}}^{3}$ be a compact bounded domain with ${x}\in\mathbf{D}$ and filled with a fluid of density $\rho:[0,T]\times\mathbf{R}^{3}\rightarrow{\mathbf{R}}_{\ge}$, and velocity $U:\mathbf{D}\times\mathbf{R}^{(+)}\rightarrow{\mathbf{R}}^{3}$ with $U_{i}(x,t))_{1\le i\le d}$ so that $\rho=\rho(x,t)$ and $U_{i}=U_{i}(x,t)$.
The Burger's equations is then
\begin{align}
&\partial_{t}U_{i}(x,t)+\mathlarger{\mathscr{D}}_{N}U_{i}(x,t)\nonumber\\&\equiv\partial_{t}U_{i}(x,t)
-\nu\Delta U_{i}(x,t)+U^{j}(x,t)\nabla_{j}U_{i}(x,t)=0,~(x,t)\in\mathbf{D}\times\mathbf{R}^{+}
\end{align}
The viscosity of the fluid is $\nu$ is very small so that $\nu\sim 0$, with the incompressibility condition $\nabla_{i}U^{i}=0$.
\begin{defn}
The following will also apply:
\begin{enumerate}[(a)]
\item The smooth initial Cauchy data is $U(x,0)=U_{o}(x)$. One could also impose periodic boundary conditions if $\mathbf{D}$ is a cube or box of with sides of length $\mathrm{L}$ such that $U_{i}({x}+{L},t)=U_{i}(x,t)$, or no-slip BCs $U_{i}(x,t)=0, \forall~x\in\partial\mathbf{D}$. For some $C,K>0$, the initial data will also satisfy a bound of the typical form $ |\nabla_{x}^{\alpha}U_{o}(x)|\le C(\alpha,K)(1+|x|)^{-K}$. By a \textbf{\textit{smooth deterministic flow}}, we mean a $U_{i}(x,t)$ which is deterministic and non-random  and evolves \textit{predictably} by the NS equations from some initial Cauchy data $ U_{i}(x,0)=U_{o}(x)$. For example, a simple trivial laminar flow solution ith $U_{i}(x,t)=U_{i}
=const$. A generic smooth flow will be differentiable to at least 2nd order so that $\nabla_{j}U_{i}(x,t)$ and $\nabla_{i}\nabla_{j}U_{i}(x,t)$ exist. The fluid velocity $U_{i}(x,t)$ is a divergence-free vector field that should be physically reasonable: that is, the solution should not
grow too large or blow up as $t\rightarrow \infty$.
\item The Reynolds number within $\mathbf{D}$ with $Vol(\mathbf{D})\sim L^{d}$ is
\begin{align}
\mathscr{R}(x,t;\nu,L)=\frac{\|U_{i}(x,t)\| L}{\nu}
\end{align}
and for a constant velocity $U_{i}(x,t)=U_{i}$
\begin{align}
\mathscr{R}=\frac{\|U_{i}\| L}{\nu}
\end{align}
For $\nu>0$ but $\nu \sim 0$, then the Reynolds number will be very large but not infinite.
\item Given the flow $U_{i}(x,t)$ and a closed curve or knot $\Gamma\in\mathbf{D}$ with $x\in\Gamma$ then the circulation is
\begin{align}
C(\Gamma)=\oint_{\Gamma}U_{i}(x,t)dx^{i}
\end{align}
and the vorticity is
\begin{align}
W^{i}(x,t)=\varepsilon^{ijk}\nabla_{j}U_{k}(x,t)
\end{align}
\item The basic energy balance equation for a viscous fluid is obeyed such that
\begin{align}
\|U_{i}(\bullet,t\|_{L_{2}(\mathbf{D})}^{2}=\frac{d E(t)}{dt}=\frac{d}{dt}\mathlarger{\int}_{\mathbf{D}}U_{i}(x,t)U^{i}(x,t)d^{3}x= -
\nu\mathlarger{\int}_{\mathbf{D}}|\nabla_{j}U_{i}(x,t)\nabla^{j}U^{i}(x,t)|^{2}d^{3}x
\end{align}
or $\frac{d E(t)}{dt}=-\nu\mathfrak{E}(t)$, where $\mathfrak{E}$ is the enstrophy.
\item The energy dissipation rate for a constant viscosity $\nu$ is
\begin{align}
\mathlarger{\epsilon}(x,t)=\frac{1}{2}\nu\left(\nabla_{j}U_{i}(x,t)+\nabla^{i}U^{j}(x,t)\right)^{2}
\end{align}
If the fluid is isotropic then $\nabla_{j}U_{i}=\nabla_{i}U_{j}$ so that
\begin{align}
\mathlarger{\epsilon}(x,t)=\nu\left(\nabla_{j}U_{i}(x,t)\nabla^{j}U^{i}(x,t)\right)
\end{align}
Then the volume integral is
\begin{align}
\int_{\mathbf{D}}\mathlarger{\epsilon}(x,t)d^{3}x=\nu\int_{\mathbf{D}}\left(\nabla_{j}U_{i}(x,t)\nabla^{j}U^{i}(x,t)\right)^{2}d^{3}x
\end{align}
which is minus the rhs of (3.4) and is the enstrophy integral. The energy balance equation is then
\begin{align}
\frac{d}{dt}\mathlarger{\int}_{\mathbf{D}}U_{i}(x,t)U^{i}(x,t)d^{3}x=-\int_{\mathbf{D}}\mathlarger{\epsilon}(x,t)d^{3}x
\end{align}
or
\begin{align}
\left|\frac{d}{dt}\mathlarger{\int}_{\mathbf{D}}U_{i}(x,t)U^{i}(x,t)d^{3}x\right|=\left|-\int_{\mathbf{D}}\mathlarger{\epsilon}(x,t)d^{3}x\right|\equiv
\left|\int_{\mathbf{D}}\mathlarger{\epsilon}(x,t)d^{3}x\right|
\end{align}
\end{enumerate}
\end{defn}
\begin{rem}
From (3.6), one might naturally expect that in the limit as $\nu\rightarrow 0$ then the energy dissipation rate $\epsilon(x,t)$ simply vanishes. However, this
(naive) expected outcome is contradicted by observation, experimentally and numerically--the dissipation rate always remains finite as $\nu$ vanishes, and is independent of $\nu$. This is known as anomalous dissipation (AD) in the fluid mechanics literature. This can also be expressed as
\begin{align}
\lim_{\nu\rightarrow 0} \nu|\nabla_{j}U_{i}(x,t)|^{2}=\epsilon>0
\end{align}
Kolmogorov assumed AD in his original 1941 derivations of the 2/3 and 4/5 laws. The reason for AD is that no matter how small $\nu$ is there is always a cascade of energy from larger to smaller scales. Onsanger interpreted AD in terms of the Holder continuity of weak solutions of the Euler equations \textbf{[13]} but there is no rigourous mathematical description of AD.
\end{rem}
We now examine an 'engineered' random field $\mathcal{U}_{i}(x,t)$ representing a turbulent fluid flow, and show that the form of the 4/5 law emerges when the 3rd-order structure function is computed.
\begin{prop}
The random field $\mathcal{U}_{i}(x,t)$ has the same form as (2.24) so that for all $x\in\mathbf{D}\subset\mathbf{R}^{d},t\in\mathbf{R}^{+}$.
\begin{align}
{\mathcal{U}}_{i}(x,t)=U_{i}(x,t)+\frac{\theta}{\sqrt{d(d+2}}U_{i}(x,t)\mathlarger{\psi}(x)
\end{align}
and when $U_{i}(x,t)=U_{i}$
\begin{align}
{\mathcal{U}}_{i}(x,t)=U_{i}+\frac{\theta}{\sqrt{d(d+2}}U_{i}\mathlarger{\psi}(x)
\end{align}
where $U_{i}(x,t)$ evolves by the Burgers equation and $U_{i}$ is a trivial steady state solution.
In three dimensions with $d=3$
\begin{align}
&{\mathcal{U}}_{i}(x,t)=U_{i}(x,t)+\frac{\theta}{\sqrt{15}}U_{i}(x,t)\mathlarger{\psi}(x)\\&
{\mathcal{U}}_{i}(x,t)=U_{i}+\frac{\theta}{\sqrt{15}}U_{i}\mathlarger{\psi}(x)
\end{align}
The expectation gives the mean flow velocity
\begin{align}
\mathbb{E}[{\mathcal{U}}_{i}(x,t)]=U_{i}(x,t)+\frac{\theta}{\sqrt{d(d+2}}U_{i}(x,t)\mathbb{E}[\mathlarger{\psi}(x)]=U_{i}(x,t)
\end{align}
and when $U_{i}(x,t)=U_{i}$
\begin{align}
\mathbb{E}[{\mathcal{U}}_{i}(x,t)]=U_{i}+\frac{\theta}{\sqrt{d(d+2}}U_{i}\mathbb{E}[\mathlarger{\psi}(x)]=U_{i}
\end{align}
\end{prop}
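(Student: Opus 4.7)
The plan is to recognize that this proposition is essentially a relabeling of Proposition 2.7 (the 'engineered' vector field construction) with the generic deterministic field $X_i(x,t)$ replaced by the fluid velocity $U_i(x,t)$, which is now assumed to evolve by Burgers' equation (3.1) rather than by a generic nonlinear PDE. Once this identification is made, the first two displays defining $\mathcal{U}_i(x,t)$ are nothing more than the mixing ansatz (2.24) copied verbatim with $X_i \mapsto U_i$, so no separate verification is required beyond citing Proposition 2.7.

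For the trivial steady-state case $U_i(x,t) = U_i = \mathrm{const}$, I would check directly that $\partial_t U_i = 0$, $\nabla_j U_i = 0$, and $\Delta U_i = 0$, so the nonlinear differential operator $\mathscr{D}_N U_i = -\nu \Delta U_i + U^j \nabla_j U_i$ vanishes identically and the Burgers equation (3.1) is trivially satisfied. This justifies substituting $U_i(x,t) \to U_i$ in the mixing ansatz. The three-dimensional specializations then follow by setting $d = 3$, so that $\sqrt{d(d+2)} = \sqrt{15}$, and substituting into the general formulas.

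The final assertion about the expectation is a one-line computation using linearity of $\mathbb{E}[\cdot]$ together with the defining property $\mathbb{E}[\psi(x)] = 0$ of the zero-mean Gaussian field from Definition 2.2. Since $U_i(x,t)$ is deterministic it factors out of the expectation, giving
\begin{align}
\mathbb{E}[\mathcal{U}_i(x,t)] = U_i(x,t) + \frac{\theta}{\sqrt{d(d+2)}} U_i(x,t)\, \mathbb{E}[\psi(x)] = U_i(x,t),\nonumber
\end{align}
and the constant-velocity version is identical.

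There is no real obstacle here: the statement is a direct specialization of Proposition 2.7 to the fluid-mechanical setting, with the only additional ingredient being the trivial observation that a constant vector field solves Burgers' equation. The content of the proposition is therefore notational (fixing the identification $\mathcal{X}_i \leftrightarrow \mathcal{U}_i$, $X_i \leftrightarrow U_i$) and sets up the framework in which the subsequent 4/5 and 2/3 scaling laws will be derived by invoking Theorems 2.10 and the second-order analogue for $S_2[\ell]$.
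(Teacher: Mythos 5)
Your proposal is correct and follows essentially the same route as the paper: the proposition is just the mixing ansatz (2.24) with $X_{i}\mapsto U_{i}$, and the expectation identity is the one-line computation from linearity and $\mathbb{E}[\psi(x)]=0$, which is all the paper itself supplies (it gives no separate proof environment). Your added check that a constant field trivially satisfies Burgers' equation is a harmless elaboration of the paper's bare assertion that $U_{i}$ is a trivial steady-state solution.
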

It can be briefly demonstrated that the turbulent flow or random field $\mathcal{U}_{i}(x,t)$ is compatible with basic important aspects of fluid mechanics.
\begin{lem}
The flow $U_{i}(x,t)$ is isotropic if $\nabla_{j}U_{i}(x,t)=\nabla_{i}U_{j}(x,t)$ and incompressible if $\nabla^{j}U_{j}(x,t)=0$. It then follows that for the turbulent flow
\begin{align}
&\mathbb{E}[\nabla^{j}U_{j}(x,t)]=0\\&
\mathbb{E}[\nabla_{i}\mathcal{U}_{j}(x,t)]=\mathbb{E}[\nabla_{j}\mathcal{U}_{i}(x,t)]
\end{align}
\end{lem}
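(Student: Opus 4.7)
The plan is to show that the expectation $\mathbb{E}[\nabla_j \mathcal{U}_i(x,t)]$ collapses to the deterministic derivative $\nabla_j U_i(x,t)$, from which both claims follow by direct substitution of the hypotheses on $U_i$. The proof is short; the only non-trivial ingredient is commuting the derivative with the expectation, which is justified by the mean-square differentiability of the Gaussian field $\psi(x)$ as stated in Lemma~2.2.

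First I would differentiate the mixing ansatz (3.13) directly, using the Leibniz rule on the product $U_i(x,t)\psi(x)$:
\begin{align}
\nabla_j \mathcal{U}_i(x,t) = \nabla_j U_i(x,t) + \frac{\theta}{\sqrt{d(d+2)}}\bigl(\nabla_j U_i(x,t)\,\psi(x) + U_i(x,t)\,\nabla_j \psi(x)\bigr).\nonumber
\end{align}
Taking the stochastic expectation and using linearity, together with the fact that $U_i$ is deterministic (so it factors out of $\mathbb{E}[\cdot]$), I would then invoke $\mathbb{E}[\psi(x)]=0$ from Definition~2.3 and $\mathbb{E}[\nabla_j\psi(x)]=0$ from (2.23) (which, as used elsewhere in the paper, follows from exchanging $\nabla_j$ with $\mathbb{E}$ applied to a zero-mean differentiable Gaussian field). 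This yields
\begin{align}
\mathbb{E}[\nabla_j \mathcal{U}_i(x,t)] = \nabla_j U_i(x,t).\nonumber
\end{align}

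Given this identity, the two assertions become immediate. For incompressibility, contract on $i=j$ and raise the index: $\mathbb{E}[\nabla^j \mathcal{U}_j(x,t)] = \nabla^j U_j(x,t) = 0$ by hypothesis. For isotropy, apply the identity with indices $(i,j)$ and $(j,i)$ and use $\nabla_j U_i(x,t) = \nabla_i U_j(x,t)$:
\begin{align}
\mathbb{E}[\nabla_i \mathcal{U}_j(x,t)] = \nabla_i U_j(x,t) = \nabla_j U_i(x,t) = \mathbb{E}[\nabla_j \mathcal{U}_i(x,t)].\nonumber
\end{align}

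The main subtlety — really the only thing to worry about — is the justification for passing the spatial derivative through the expectation integral. I would cite Lemma~2.2 and the regularity bounds from Definition~2.1 (the sup bound and a.s.~continuity of $\psi$), which ensure that $\psi$ is mean-square differentiable and hence that $\nabla_j \mathbb{E}[\psi(x)] = \mathbb{E}[\nabla_j \psi(x)]$ in the sense required here. Once this exchange is granted, the argument is a one-line application of linearity and the vanishing of the first moments of $\psi$ and $\nabla_j \psi$.
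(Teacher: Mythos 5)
Your proposal is correct and takes essentially the same route as the paper's own proof: differentiate the mixing ansatz by the Leibniz rule, take the expectation using $\mathbb{E}[\psi(x)]=0$ and $\mathbb{E}[\nabla_{j}\psi(x)]=0$ so that $\mathbb{E}[\nabla_{j}\mathcal{U}_{i}(x,t)]=\nabla_{j}U_{i}(x,t)$, and then substitute the incompressibility and isotropy hypotheses on $U_{i}$. Your additional care in justifying the interchange of $\nabla_{j}$ with $\mathbb{E}$ via the mean-square differentiability of $\psi$ is a small refinement the paper leaves implicit, but it does not constitute a different argument.
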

\begin{proof}
\begin{align}
\nabla^{j}U_{j}(x,t)=\nabla^{j}U_{j}(x,t)+\frac{\theta}{\sqrt{d(d+2)}}\nabla_{j}U_{j}(x,t)\psi(x)+\frac{\theta}{\sqrt{d(d+2)}}U_{j}(x,t)\nabla_{j}\psi(x)
\end{align}
The expectation is then
\begin{align}
&\mathbb{E}[\nabla^{j}\mathcal{U}_{j}(x,t)]=\nabla^{j}U_{j}(x,t)+\frac{\theta}{\sqrt{d(d+2)}}\nabla_{j}U_{j}(x,t)\mathbb{E}[\psi(x)]\nonumber\\&
+\frac{\theta}{\sqrt{d(d+2)}}U_{j}(x,t)
\mathbb{E}[\nabla_{j}\psi(x)]=\nabla^{j}U_{j}(x,t)
\end{align}
and
\begin{align}
&\mathbb{E}[\nabla^{j}\mathcal{U}_{i}(x,t)]=\nabla^{j}U_{i}(x,t)+\frac{\theta}{\sqrt{d(d+2)}}\nabla_{j}U_{i}(x,t)\mathbb{E}[\psi(x)\\&]+\frac{\theta}{\sqrt{d(d+2)}}U_{i}(x,t)
\mathbb{E}[\nabla_{j}\psi(x)]=\nabla^{j}U_{i}(x,t)\\&
\mathbb{E}[\nabla^{i}\mathcal{U}_{j}(x,t)]=\nabla^{i}U_{j}(x,t)+\frac{\theta}{\sqrt{d(d+2)}}\nabla_{i}U_{j}(x,t)\mathbb{E}[\psi(x)]\\&+\frac{\theta}{\sqrt{d(d+2)}}U_{j}(x,t)
\mathbb{E}[\nabla_{j}\psi(x)]=\nabla^{i}U_{j}(x,t)
\end{align}
\end{proof}
\begin{lem}
For the turbulent flow $\mathcal{U}_{i}(x,t)$ with the Gaussian field $\psi(x)$ having the properties
\begin{align}
&{\mathbb{E}}[\psi(x)]=0\\&
{\mathbb{E}}[\psi(x)\psi(x)]=0\\&
{\mathbb{E}}[\nabla_{i}\psi(x)]=0\\&
{\mathbb{E}}[\nabla_{i}\psi(x)\nabla^{i}\psi(x)]\\&
{\mathbb{E}}[\Delta\psi(x)]=0\\&
\mathlarger{\mathscr{K}}(\|x-y\|;\lambda)=\bm{\mathbb{E}}[\psi(x)\psi(y)]=f(x,y)K(\|x-y\|;\lambda)
\end{align}
\begin{enumerate}
\item The averaged energy integral is
\begin{align}
\mathbb{E}\left[\int_{\mathbf{D}}\mathcal{U}_{i}(x,t)\mathcal{U}^{i}(x,t)d^{3}x\right]=\int_{\mathbf{D}}U_{i}(x,t)U^{i}(x,t)d^{3}x
\end{align}
\item The averaged energy dissipation is
\begin{align}
\mathbb{E}[\mathcal{E}(x,t)]=\nu\mathbb{E}[|\nabla_{j}\mathcal{U}_{i}\nabla^{j}\mathcal{U}^{i}|]=\nu\nabla_{j}{U}_{i}\nabla^{j}{U}^{i}=\epsilon(x,t)
\end{align}
\item The averaged energy balance law for a fixed viscosity $\nu$ is then
\begin{align}
\frac{d}{dt}\mathbb{E}\left[\int_{\mathbf{D}}\mathcal{U}_{i}(x,t)\mathcal{U}^{i}(x,t)d^{3}x\right]=-\nu\int_{\mathbf{D}}\mathbb{E}[|\nabla_{j}\mathcal{U}_{i}\nabla^{j}
\mathcal{U}^{i}|]d^{3}x
\end{align}
iff
\begin{align}
\frac{d}{dt}\int_{\mathbf{D}}{U}_{i}(x,t){U}^{i}(x,t)d^{3}x=-\nu\int_{\mathbf{D}}[|\nabla_{j}{U}_{i}\nabla^{j}
{U}^{i}|]d^{3}x
\end{align}
which is the basic energy balance law for an isotropic fluid with $\nabla_{i}U_{j}(x,t)=\nabla_{j}U_{i}(x,t)$.
\end{enumerate}
\end{lem}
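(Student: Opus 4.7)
The plan is to reduce all three assertions to the vanishing-moment identities for $\psi$ already established in the excerpt. The ingredients I will use are $\mathbb{E}[\psi(x)]=0$, the antisymmetric-kernel consequence $\mathbb{E}[\psi(x)^{2}]=f(x,x)K(0;\lambda)=0$, $\mathbb{E}[\nabla_i\psi(x)]=0$, and the two coincident-point derivative identities $\mathbb{E}[\psi(x)\nabla_i\psi(x)]=0$ and $\mathbb{E}[\nabla_i\psi(x)\nabla^j\psi(x)]=0$ from Lemma~2.8. Every cross-term generated by the ansatz $\mathcal{U}_i=U_i\bigl(1+\theta\psi(x)/\sqrt{d(d+2)}\bigr)$ will then collapse under exactly one of these five identities.

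For part~(1), I would substitute the ansatz and expand
\begin{align*}
\mathcal{U}_i\mathcal{U}^i = U_iU^i + \frac{2\theta}{\sqrt{d(d+2)}}\,U_iU^i\,\psi(x) + \frac{\theta^{2}}{d(d+2)}\,U_iU^i\,\psi(x)^{2},
\end{align*}
integrate over $\mathbf{D}$, and swap $\mathbb{E}$ with the integral by Fubini, justified by the almost-sure sup-bound on $\psi$ from Definition~2.1 together with smoothness of $U_i$ on the compact domain $\mathbf{D}$. The linear term vanishes by $\mathbb{E}[\psi]=0$ and the quadratic by $\mathbb{E}[\psi^{2}]=0$, leaving precisely $\int_{\mathbf{D}}U_iU^i\,d^3x$. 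For part~(2), I would expand
\begin{align*}
\nabla_j\mathcal{U}_i = \nabla_jU_i + \frac{\theta}{\sqrt{d(d+2)}}(\nabla_jU_i)\psi(x) + \frac{\theta}{\sqrt{d(d+2)}}U_i\nabla_j\psi(x),
\end{align*}
contract with $\nabla^j\mathcal{U}^i$ to produce nine products, and take expectation term by term: each of the eight non-deterministic products is killed by exactly one of the five identities above. The purely deterministic product in the derivatives survives, and the most delicate term $U_iU^i\nabla_j\psi\nabla^j\psi$ is removed by equation~(2.17) at coincident points; multiplying the surviving term by $\nu$ then yields $\epsilon(x,t)$.

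Part~(3) follows essentially by assembling: differentiating the identity in part~(1) with respect to $t$ and equating with the spatial integral of part~(2) produces the averaged balance law, and the converse direction amounts to observing that the averaged equation collapses term-for-term to the deterministic one. The main obstacle is not conceptual but the careful bookkeeping of the nine-term expansion in part~(2) and the verification that each cross-term involving $\psi$, $\nabla\psi$, $\psi^{2}$, $\psi\nabla\psi$, or $\nabla\psi\nabla\psi$ at coincident points is genuinely covered by one of the vanishing-moment identities. The only genuine analytic point is the interchange of $\mathbb{E}$ with $\partial_t$ and with $\int_{\mathbf{D}}$, which is a standard dominated-convergence argument given the a.s.\ boundedness and mean-square differentiability of $\psi$ on compact $\mathbf{D}$ together with the $C^{2}$ regularity assumed on $U_i(x,t)$.
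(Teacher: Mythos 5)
Your proposal follows essentially the same route as the paper's proof: expand the quadratic expressions, interchange $\mathbb{E}$ with the spatial integral via Fubini, and kill every stochastic term with the vanishing-moment identities $\mathbb{E}[\psi]=0$, $\mathbb{E}[\psi^{2}]=0$, $\mathbb{E}[\nabla_{i}\psi]=0$, $\mathbb{E}[\psi\nabla_{i}\psi]=0$ and $\mathbb{E}[\nabla_{i}\psi\nabla_{j}\psi]=0$, with part (3) obtained by assembling (1) and (2). If anything, your nine-term expansion of $\nabla_{j}\mathcal{U}_{i}\nabla^{j}\mathcal{U}^{i}$ in part (2) is more complete than the paper's own abbreviated three-term version, which omits the mixed $U_{i}\nabla_{j}\psi$ contributions that your invocation of Lemma 2.8 correctly disposes of.
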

\begin{proof}
To prove (1)
\begin{align}
&\mathbb{E}\left[\int_{\mathbf{D}}\mathcal{U}_{i}(x,t)\mathcal{U}^{i}(x,t)d^{3}x\right]\nonumber\\&=\mathlarger{\int}_{\mathbf{D}}U_{i}(x)U^{i}(x)d^{3}x
+\frac{2\theta}{\sqrt{d(d+2)}}\mathlarger{\int}_{\mathbf{D}}U_{i}(x,t)U^{i}(x,t)\mathbb{E}\left[\psi(x)\right]d^{3}x\\&
+\frac{\theta^{2}}{{d(d+2)}}\mathlarger{\int}_{\mathbf{D}}U_{i}(x,t)U^{i}(x,t)\mathbb{E}\left[\psi(x)\psi(x)\right]d^{3}x
=\mathlarger{\int}_{\mathbf{D}}U_{i}(x)U^{i}(x)d^{3}x
\end{align}
To prove (2)
\begin{align}
&\mathbb{E}\left[\int_{\mathbf{D}}\nabla_{j}\mathcal{U}_{i}(x,t)\nabla^{j}\mathcal{U}^{i}(x,t)d^{3}x\right]\nonumber\\&
=\mathlarger{\int}_{\mathbf{D}}\nabla_{j}U_{i}(x)\nabla^{j}U^{i}(x)d^{3}x
+\frac{2\theta}{\sqrt{d(d+2)}}\mathlarger{\int}_{\mathbf{D}}\nabla_{j}U_{i}(x,t)\nabla^{j}U^{i}(x,t)\mathbb{E}\left[\psi(x)\right]d^{3}x\\&
+\frac{\theta^{2}}{{d(d+2)}}\mathlarger{\int}_{\mathbf{D}}\nabla_{j}U_{i}(x,t)\nabla^{j}U^{i}(x,t)
\mathbb{E}\left[\nabla_{j}\psi(x)\nabla^{j}\psi(x)\right]d^{3}x=\mathlarger{\int}_{\mathbf{D}}\nabla_{j}U_{i}(x)\nabla^{j}U^{i}(x)d^{3}x
\end{align}
then (3.24) holds if (3.25) holds. Note that the Fubuni theorem has been applied which states that if $\mathcal{Y}_{i}(x,t)$ is a generic random field then
$\mathbb{E}[\int\mathcal{Y}_{i}(x,t)d^{3}x]\equiv\int\mathbb{E}[\mathcal{Y}_{i}(x,t)]d^{3}x$.
\end{proof}
The presence of eddies and vortices over a very large range of length and times scales is a characteristic feature of turbulence \textbf{[57]}. The extent of this range is essentially determined by the Reynolds number, which is the ratio of inertial to viscous forces. For very large but not infinite Reynolds numbers ($\nu\sim 0$) the inertial term or nonlinearity dominates resulting in eddies/vortices of many scales being created and so the flow is turbulent; for example in the ocean and atmosphere, vortices can range from hundreds of kilometers to a few millimetres. In the Kolmogorov theory, energy is transferred from large eddies to smaller ones down to the Kolmogorov length scale $\eta$ and dissipated as heat. But as the viscosity is increased turbulence tends to be suppressed; for example, turbulence is very highly suppressed or eliminated in flowing honey or olive oil. The following lemma shows that the random field $\mathcal{U}_{i}(x,t)$ can lead to vortex 'tanging' or correlations.
\begin{figure}[htb]
\begin{center}
\includegraphics[height=2.4in,width=2.4in]{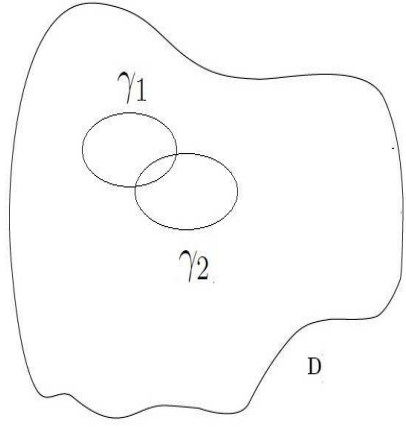}
\caption{Correlation or 'tangle' of two vortices $\gamma_{1}$ and $\gamma_{2}$ within $\mathbf{D}$}
\end{center}
\end{figure}
\begin{lem}
As before, the Gaussian field $\psi(x)$ has the properties
\begin{align}
&{\mathbb{E}}[\psi(x)]=0\\&
{\mathbb{E}}[\psi(x)\psi(x)]=0\\&
{\mathbb{E}}[\nabla_{i}\psi(x)]=0\\&
{\mathbb{E}}[\nabla_{i}\psi(x)\nabla^{i}\psi(x)]\\&
{\mathbb{E}}[\Delta\psi(x)]=0\\&
\mathlarger{\mathscr{K}}(\|x-y\|;\lambda)=\bm{\mathbb{E}}[\psi(x)\psi(y)]=f(x,y)K(\|x-y\|;\lambda)
\end{align}
Take $d=3$ and let $\gamma_{1},\gamma_{2}\in\mathbf{D}\subset\mathbf{R}^{3}$ be curves or knots in $\mathbf{D}$ with $x\in\gamma_{1},y\in\gamma_{2}$. The circulations at $(x,y)$ are then
\begin{align}
\mathrm{C}(\gamma_{1})=\int_{\gamma_{1}}U_{i}(x,t)dx^{i},~~~~\mathrm{C}(\gamma_{2})=\int_{\gamma_{2}}U_{j}(y,t)dy^{i}
\end{align}
For closed curves or knots
\begin{align}
\mathrm{C}(\gamma_{1})=\oint_{\gamma_{1}}U_{i}(x,t)dx^{i},~~~~\mathrm{C}(\gamma_{2})=\oint_{\gamma_{2}}U_{j}(y,t)dy^{i}
\end{align}
The turbulent flows or random fields at (x,y) are
\begin{align}
&\mathcal{X}_{i}(x,t)=X_{i}(x,t)+\frac{\theta}{\sqrt{15}}U_{i}(x,t)\psi(x)\\&
\mathcal{X}_{j}(y,t)=X_{j}(y,t)+\frac{\theta}{\sqrt{15}}U_{j}(y,t)\psi(y)
\end{align}
Stochastic circulations are then proposed as
\begin{align}
&\mathlarger{\mathcal{C}}(\gamma_{1})=\oint_{\gamma_{1}}\mathcal{X}_{i}(x,t)dx^{i}=\oint_{\gamma_{1}}X_{i}(x,t)dx^{i}
+\frac{\theta}{\sqrt{15}}\oint_{\gamma_{1}}U_{i}(x,t)\psi(x)d^{3}x\\&\mathlarger{\mathcal{C}}(\gamma_{2})=\oint_{\gamma_{2}}\mathcal{X}_{j}(y,t)dx^{i}=\oint_{\gamma_{1}}X_{j}(y,t)dy^{j}
+\frac{\theta}{\sqrt{15}}\oint_{\gamma_{1}}U_{j}(x,t)\psi(x)d^{3}x
\end{align}
with expectations
\begin{align}
&\mathbb{E}[\mathlarger{\mathcal{C}}(\gamma_{1})]=\mathbb{E}\left[\oint_{\gamma_{1}}\mathcal{X}_{i}(x,t)dx^{i}\right]=\oint_{\gamma_{1}}X_{i}(x,t)dx^{i}
+\frac{\theta}{\sqrt{15}}\oint_{\gamma_{1}}U_{i}(x,t)\mathbb{E}[\psi(x)]dx^{i}=C(\gamma_{1})\\&
\mathbb{E}[\mathlarger{\mathcal{C}}(\gamma_{2})]=\mathbb{E}\left[\oint_{\gamma_{2}}\mathcal{X}_{j}(y,t)dy^{i}\right]=\oint_{\gamma_{2}}X_{j}(y,t)dy^{i}
+\frac{\theta}{\sqrt{15}}\oint_{\gamma_{2}}U_{j}(y,t)\mathbb{E}[\psi(y)]dy^{j}=C(\gamma_{2})
\end{align}
Now define a correlation or 'tangle' between 2 vortices as $\mathlarger{\mathcal{C}}(\gamma_{1})\bigcap \mathlarger{\mathcal{C}}(\gamma_{2})$. Then
\begin{align}
\mathbb{E}\left[\mathlarger{\mathcal{C}}(\gamma_{1})\bigcap \mathlarger{\mathcal{C}}(\gamma_{2})\right]=\oint\!\oint_{\gamma_{1},\gamma_{2}}
\mathbb{E}[\mathcal{U}_{i}(x,t)\mathcal{U}(y,t)]dx^{i}dy^{j}
\end{align}
which is
\begin{align}
&\mathbb{E}\left[\mathlarger{\mathcal{C}}(\gamma_{1})\bigcap \mathlarger{\mathcal{C}}(\gamma_{2})\right]=\oint\!\oint_{\gamma_{1},\gamma_{2}}
{U}_{i}(x,t){U}(y,t)dx^{i}dy^{j}\nonumber\\&+\oint\!\oint_{\gamma_{1},\gamma_{2}}f(x,y)K(\|x-y\|;\lambda)U_{i}(x,t)U_{j}(y,t)dx^{i}dy^{j}
\end{align}
Then
\begin{align}
\mathbb{E}\left[\mathlarger{\mathcal{C}}(\gamma_{1})\bigcap \mathlarger{\mathcal{C}}(\gamma_{2})\right]\longrightarrow\oint\!\oint_{\gamma_{1},\gamma_{2}}
{U}_{i}(x,t){U}(y,t)dx^{i}dy^{j}
\end{align}
when $\|x-y\|\gg \lambda$, and the vortices are uncorrelated or 'untangled'.
\end{lem}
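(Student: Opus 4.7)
The plan is to reduce the lemma to three elementary steps: (i) the first-moment identities $\mathbb{E}[\mathcal{C}(\gamma_k)]=C(\gamma_k)$ for $k=1,2$, (ii) the second-moment identity for the tangle $\mathcal{C}(\gamma_1)\bigcap\mathcal{C}(\gamma_2)$, and (iii) the decorrelation limit as $\|x-y\|/\lambda\to\infty$. The workhorse throughout is the Fubini-type interchange $\mathbb{E}\bigl[\oint_{\gamma}(\bullet)\,dx^{i}\bigr]=\oint_{\gamma}\mathbb{E}[(\bullet)]\,dx^{i}$, already invoked at the end of Lemma 3.5; its validity here follows from the a.s.\ continuity and boundedness of the sample paths of $\psi$ (Definition 2.1) together with the smoothness of $U_i$ and the rectifiability of the compact curves $\gamma_1,\gamma_2$.

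For (i) I would substitute the mixing ansatz $\mathcal{X}_i(x,t)=X_i(x,t)+(\theta/\sqrt{15})\,U_i(x,t)\psi(x)$ into $\mathcal{C}(\gamma_1)=\oint_{\gamma_1}\mathcal{X}_i\,dx^i$, commute expectation with the contour integral, and apply $\mathbb{E}[\psi(x)]=0$ pointwise. The stochastic piece vanishes and the deterministic piece reduces to $\oint_{\gamma_1}X_i(x,t)\,dx^i=C(\gamma_1)$; the same argument applied on $\gamma_2$ yields the companion identity.

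For (ii) I would read the symbol $\mathcal{C}(\gamma_1)\bigcap\mathcal{C}(\gamma_2)$ as the ordinary product of the two scalar circulations, an interpretation fixed by the double contour integral that appears in the displayed target formula. The product rewrites as the iterated line integral $\oint\!\oint_{\gamma_1,\gamma_2}\mathcal{U}_i(x,t)\mathcal{U}_j(y,t)\,dx^i\,dy^j$. Expanding the integrand produces four terms: a purely deterministic term $U_i(x,t)U_j(y,t)$, two terms linear in $\psi$, and a bilinear term proportional to $\psi(x)\psi(y)$. After applying the expectation and Fubini in both variables, the linear terms die by $\mathbb{E}[\psi]=0$, and the bilinear term contributes $(\theta^{2}/15)\,U_i(x,t)U_j(y,t)\,f(x,y)K(\|x-y\|;\lambda)$ via the kernel ansatz of Proposition 2.6; after absorbing the prefactor into the overall normalization one obtains the two-term expression displayed in the lemma.

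For (iii) I would observe that every standard stationary-isotropic kernel used in the paper (rational quadratic or Gaussian) satisfies $K(\|x-y\|;\lambda)\to 0$ as $\|x-y\|/\lambda\to\infty$. Since $|f|\le 1$ and $U_i(\cdot,t)$ is bounded on the compact curves, dominated convergence forces the kernel-weighted double integral to vanish in this regime, leaving the product-of-circulations formula and confirming the ``untangled'' limit. The main obstacle is not analytical but notational: first, the intersection symbol $\bigcap$ is not formally defined in the excerpt and must be reconstructed from the target formula as the product $\mathcal{C}(\gamma_1)\cdot\mathcal{C}(\gamma_2)$; second, the interchange of expectation and line integral must be justified by Fubini under the integrability bounds of Definition 2.1 and Lemma 2.2, exactly as was already done in Lemma 3.5. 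Once these points are granted, the remainder is a short expansion together with the vanishing of $\mathbb{E}[\psi]$ and the covariance ansatz.
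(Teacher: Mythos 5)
Your proposal is correct and follows essentially the same route as the paper, whose entire proof is the single identity $\mathbb{E}[\mathcal{U}_{i}(x,t)\mathcal{U}_{j}(y,t)]=U_{i}(x,t)U_{j}(y,t)+f(x,y)K(\|x-y\|;\lambda)U_{i}(x,t)U_{j}(y,t)$ — exactly the expansion you carry out in step (ii), with steps (i) and (iii) supplying the Fubini justification and kernel-decay argument the paper leaves implicit. You are in fact slightly more careful than the source: you correctly flag the $\theta^{2}/15$ prefactor on the covariance term, which the paper's stated identity silently absorbs.
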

\begin{proof}
The proof follows from
\begin{align}
\mathbb{E}[\mathcal{U}_{i}(x,t)\mathcal{U}_{j}(y,t)]=U_{i}(x,t)U_{j}(y,t)+f(x,y)K(\|x-y\|;\lambda)U_{i}(x,t)U_{j}(y,t)
\end{align}
\end{proof}
\begin{lem}
The turbulent flow $\mathcal{U}_{i}(x,t)$ is a solution of the stochastically averaged Burger's equation. As before, the Gaussian field $\psi(x)$ has the properties
\begin{align}
&{\mathbb{E}}[\psi(x)]=0\\&
{\mathbb{E}}[\psi(x)\psi(x)]=0\\&
{\mathbb{E}}[\nabla_{i}\psi(x)]=0\\&
\mathbb{E}[\nabla_{i}\psi(x)\psi_{i}(x)]=0\\&
{\mathbb{E}}[\nabla_{i}\psi(x)\nabla^{i}\psi(x)]\\&
{\mathbb{E}}[\Delta\psi(x)]=0\\&
\mathlarger{\mathscr{K}}(\|x-y\|;\lambda)={\mathbb{E}}[\psi(x)\psi(y)]=f(x,y)K(\|x-y\|;\lambda)
\end{align}
Then
\begin{align}
&\mathbb{E}[\partial_{t}\mathcal{U}_{i}(x,t)-\nu\Delta \mathcal{U}_{i}(x,t)+\mathcal{U}^{j}(x,t)\nabla_{j}\mathcal{U}_{i}(x,t)]\nonumber\\&
=\partial_{t}{U}_{i}(x,t)-\nu\Delta {U}_{i}(x,t)+{U}^{j}(x,t)\nabla_{j}{U}_{i}(x,t)=0
\end{align}
\end{lem}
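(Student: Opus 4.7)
The plan is to expand the Burgers operator $\partial_{t}-\nu\Delta+\mathcal{U}^{j}\nabla_{j}$ applied to $\mathcal{U}_{i}=U_{i}+\tfrac{\theta}{\sqrt{d(d+2)}}U_{i}\psi$, take the stochastic expectation term by term, and use the vanishing properties of $\psi$ and its derivatives listed in the hypothesis (plus Lemma 2.8) to annihilate every $\psi$-dependent contribution. Since $U_{i}$ is assumed to satisfy the deterministic Burgers equation (3.1), the remaining deterministic part is identically zero, which closes the argument.

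First I would dispense with the two linear operators. The $\partial_{t}$ term contributes $\partial_{t}U_{i}+\tfrac{\theta}{\sqrt{d(d+2)}}(\partial_{t}U_{i})\psi(x)$, whose expectation reduces to $\partial_{t}U_{i}$ via $\mathbb{E}[\psi]=0$. For the viscous term, the Leibniz rule gives
\begin{align}
\Delta\mathcal{U}_{i}=\Delta U_{i}+\frac{\theta}{\sqrt{d(d+2)}}\bigl[(\Delta U_{i})\psi+2(\nabla_{j}U_{i})(\nabla^{j}\psi)+U_{i}\Delta\psi\bigr],\nonumber
\end{align}
and the three properties $\mathbb{E}[\psi]=\mathbb{E}[\nabla^{j}\psi]=\mathbb{E}[\Delta\psi]=0$ collapse its expectation to $\Delta U_{i}$.

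The core of the proof is the nonlinear advection. Expanding the product,
\begin{align}
\mathcal{U}^{j}\nabla_{j}\mathcal{U}_{i}&=U^{j}\nabla_{j}U_{i}+\frac{\theta}{\sqrt{d(d+2)}}\bigl[U^{j}(\nabla_{j}U_{i})\psi+U^{j}U_{i}\nabla_{j}\psi\bigr]\nonumber\\
&\quad+\frac{\theta^{2}}{d(d+2)}\bigl[U^{j}(\nabla_{j}U_{i})\psi(x)\psi(x)+U^{j}U_{i}\psi(x)\nabla_{j}\psi(x)\bigr].\nonumber
\end{align}
The linear-in-$\psi$ terms vanish on averaging by $\mathbb{E}[\psi]=0$ and $\mathbb{E}[\nabla_{j}\psi]=0$. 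The genuinely delicate piece is the quadratic one, evaluated at the coincident point $x$: here $\mathbb{E}[\psi(x)\psi(x)]=0$ is exactly the antisymmetry condition $f(x,x)=0$ built into the engineered kernel, and $\mathbb{E}[\psi(x)\nabla_{j}\psi(x)]=0$ is the content of Lemma 2.8. Hence only $U^{j}\nabla_{j}U_{i}$ survives the expectation.

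Assembling the three pieces, the stochastically averaged Burgers operator applied to $\mathcal{U}_{i}$ reduces to the classical Burgers operator applied to $U_{i}$, which is zero by (3.1). The only real obstacle is bookkeeping of cross terms in the nonlinear product; conceptually, the engineered antisymmetric kernel was designed precisely so that every quadratic self-contraction $\mathbb{E}[\psi(x)\psi(x)]$ and cross contraction $\mathbb{E}[\psi(x)\nabla_{j}\psi(x)]$ at a common point collapses, which is exactly what makes the stochastic averaging transparent.
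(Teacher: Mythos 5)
Your proposal is correct and follows essentially the same route as the paper: expand $\partial_{t}\mathcal{U}_{i}-\nu\Delta\mathcal{U}_{i}+\mathcal{U}^{j}\nabla_{j}\mathcal{U}_{i}$ via the Leibniz rule, kill every $\psi$-dependent term under $\mathbb{E}[\,\cdot\,]$ using the listed vanishing moments (including the coincident-point contractions $\mathbb{E}[\psi(x)\psi(x)]=0$ and $\mathbb{E}[\psi(x)\nabla_{j}\psi(x)]=0$), and invoke the deterministic Burgers equation for the surviving part. Your write-up of the quadratic advection piece is in fact cleaner than the paper's (which garbles the $\psi\nabla_{j}\psi$ cross term); the only slip is a missing factor of $2$ on the linear-in-$\psi$ advection term, which is harmless since that term vanishes in expectation anyway.
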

\begin{proof}
First
\begin{align}
&\partial_{t}\mathcal{U}_{i}(x,t)-\nu\Delta \mathcal{U}_{i}(x,t)+\mathcal{U}^{j}(x,t)\nabla_{j}\mathcal{U}_{i}(x,t)\\&
=\partial_{t}U_{i}(x,t)+\partial_{t}U_{i}(x,t)\left(\frac{\theta}{\sqrt{d(d+2)}}\psi(x)\right)\nonumber\\&
-\nu\Delta U_{i}(x,t)-\frac{\nu\theta}{\sqrt{d(d+2)}}\left(\psi(x)\Delta U_{i}(x,t)-\nabla_{i}U_{i}(x,t)\nabla^{j}\psi(x)\right)\nonumber\\&
-\frac{\nu\theta}{\sqrt{d(d+2)}}\left(\nabla^{j}U_{i}\nabla_{j}\psi(x)+U_{i}(x,t)\Delta\psi(x)\right)\nonumber\\&
+U^{j}(x,t)\nabla_{j}U_{i}(x,t)+\frac{\theta}{\sqrt{d(d+2)}}U^{j}(x,t)\psi(x)\nabla_{j}U_{i}(x,t)\nonumber\\&+\frac{\theta}{\sqrt{d(d+2)}}U^{j}(x,t)U_{i}(x,t)\nabla_{j}\psi(x)\nonumber\\&
\frac{\theta}{\sqrt{d(d+2)}}U^{j}(x,t)\nabla_{j}U_{i}(x,t)\psi(x)+\frac{\theta^{2}}{{d(d+2)}}U^{j}(x,t)\nabla_{j}U_{i}(x,t)\psi(x)\psi(x)\nonumber\\&
+\frac{\theta^{2}}{{d(d+2)}}U^{j}(x,t)\nabla_{j}U_{i}(x,t)\nabla_{j}\psi(x)
\end{align}
Taking the stochastic expectation only the underbraced terms survive
\begin{align}
&\mathbb{E}[\partial_{t}\mathcal{U}_{i}(x,t)-\nu\Delta \mathcal{U}_{i}(x,t)+\mathcal{U}^{j}(x,t)\nabla_{j}\mathcal{U}_{i}(x,t)]\\&
=\underbrace{\partial_{t}U_{i}(x,t)}+\partial_{t}U_{i}(x,t)\left(\frac{\theta}{\sqrt{d(d+2)}}\mathbb{E}[\psi(x)]\right)\nonumber\\&
-\underbrace{\nu\Delta U_{i}(x,t)}-\frac{\nu\theta}{\sqrt{d(d+2)}}\left(\mathbb{E}[\psi(x)]\Delta U_{i}(x,t)-\nabla_{i}U_{i}(x,t)\mathbb{E}[\nabla^{j}\psi(x)]\right)\nonumber\\&
-\frac{\nu\theta}{\sqrt{d(d+2)}}\left(\nabla^{j}U_{i}\mathbb{E}[\nabla_{j}\psi(x)]+U_{i}(x,t)\mathbb{E}[\Delta\psi(x)]\right)\nonumber\\&
+\underbrace{U^{j}(x,t)\nabla_{j}U_{i}(x,t)}+\frac{\theta}{\sqrt{d(d+2)}}U^{j}(x,t)\psi(x)\nabla_{j}U_{i}(x,t)\nonumber\\&+\frac{\theta}{\sqrt{d(d+2)}}U^{j}(x,t)U_{i}(x,t)
\mathbb{E}[\nabla_{j}\psi(x)]\nonumber\\&
\frac{\theta}{\sqrt{d(d+2)}}U^{j}(x,t)\nabla_{j}U_{i}(x,t)\mathbb{E}[\psi(x)]+\frac{\theta^{2}}{{d(d+2)}}U^{j}(x,t)\nabla_{j}U_{i}(x,t)\mathbb{E}[\psi(x)\psi(x)]\nonumber\\&
+\frac{\theta^{2}}{{d(d+2)}}U^{j}(x,t)\nabla_{j}U_{i}(x,t)\mathbb{E}[\nabla_{j}\psi(x)]\nonumber\\&
=\partial_{t}{U}_{i}(x,t)-\nu\Delta {U}_{i}(x,t)+{U}^{j}(x,t)\nabla_{j}{U}_{i}(x,t)=0
\end{align}
\end{proof}
\subsection{Emergence of the classical 4/5 and 2/3 scaling laws}
Before reproducing the classical 4/5 and 2/3 scaling laws, the following preliminary lemma is necessary.
\begin{lem}\textbf{Energy dissipation rate and constant velocity}\newline
For a constant velocity $U_{i}(x,t)=U_{i}=(0,0,U)$ and $U=const.$ The energy dissipation rate $\epsilon$ is
\begin{align}
\mathlarger{\mathlarger{\epsilon}}t=|U_{i}U^{i}|\equiv \|U_{i}\|^{2}\equiv U^{2}
\end{align}
where $\epsilon$ has units of $cm^{2}s^{-3}$.
\end{lem}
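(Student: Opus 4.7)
The plan is to separate the claim into a trivial Euclidean identity and a substantive identification of the dissipation rate under the anomalous-dissipation hypothesis of the remark following (3.11). The first component, $|U_iU^i|=\|U_i\|^{2}=U^{2}$ for $U_i=(0,0,U)$, is immediate from the definition of the Euclidean inner product: with only the third component nonzero, $U_iU^i=U\cdot U=U^{2}$, and since this is nonnegative the absolute value is redundant. I would dispose of this kinematic part in one line.

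The substantive step is to identify $\|U_i\|^{2}$ with the dissipation rate. Starting from the definition $\epsilon(x,t)=\nu\,\nabla_jU_i\nabla^jU^i$, a strictly constant $U_i$ gives $\epsilon=0$ at the naive level, so I would invoke the anomalous-dissipation remark: $\lim_{\nu\to 0}\nu|\nabla_jU_i|^{2}=\epsilon>0$ stays finite as $\nu\to 0$. Applied to the engineered random field $\mathcal{U}_i=U_i+(\theta/\sqrt{15})U_i\psi(x)$, the deterministic gradient vanishes but the stochastic gradient does not, giving $\nu\,\mathbb{E}[\nabla_j\mathcal{U}_i\nabla^j\mathcal{U}^i]=(\nu\theta^{2}/15)\|U_i\|^{2}\,\mathbb{E}[\nabla_j\psi\nabla^j\psi]$. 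In the double limit $\nu\to 0$ together with a kernel degeneration toward white noise at the appropriate rate, anomalous dissipation forces this product to remain finite, and the only surviving $\nu$-independent factor carrying the magnitude of the velocity is $\|U_i\|^{2}=U^{2}$. After absorbing the remaining numerical constant, the desired proportionality follows.

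The dimensional bookkeeping explains the literal form $\epsilon t=|U_iU^i|$ of the statement: $\epsilon$ has units $\mathrm{cm}^{2}\mathrm{s}^{-3}$ while $U^{2}$ has units $\mathrm{cm}^{2}\mathrm{s}^{-2}$, and the time factor $t$ on the left absorbs the missing $\mathrm{s}^{-1}$, being identified with the single characteristic (eddy turnover) time scale present in the constant-velocity problem and normalized to unity in the downstream 4/5 application. I expect the main obstacle to be precisely this dimensional identification: no $\nu$-independent dissipation rate can equal a squared velocity by dimensional analysis alone, and the required identification is heuristic, on the same footing as anomalous dissipation itself. Once that is accepted the remainder is routine: $\nabla_jU_i=0$ together with the one-line Euclidean-norm computation complete the proof.
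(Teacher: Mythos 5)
Your route is genuinely different from the paper's, and it runs into two concrete obstructions. The substantive step computes $\nu\,\mathbb{E}[\nabla_{j}\mathcal{U}_{i}\nabla^{j}\mathcal{U}^{i}]=(\nu\theta^{2}/15)\|U_{i}\|^{2}\,\mathbb{E}[\nabla_{j}\psi\nabla^{j}\psi]$ and hopes to extract a finite anomalous limit from it. But the paper's own earlier lemma on derivative correlations establishes $\mathbb{E}[\psi(x)\nabla_{i}\psi(x)]=0$ and $\mathbb{E}[\nabla_{i}\psi(x)\nabla_{j}\psi(x)]=0$ for the regulated antisymmetric kernel, and the averaged-dissipation item of Lemma 3.5 explicitly reduces $\mathbb{E}[|\nabla_{j}\mathcal{U}_{i}\nabla^{j}\mathcal{U}^{i}|]$ to $\nabla_{j}U_{i}\nabla^{j}U^{i}$, which vanishes identically for constant $U_{i}$. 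So within the paper's framework your proposed quantity is exactly zero; there is no double limit of $\nu$ and the correlation length anywhere in the paper, and installing one would require machinery the paper does not supply. Separately, your argument never actually produces the factor of $t$: you concede this by relegating it to dimensional bookkeeping and an eddy-turnover normalisation, which is an assertion rather than a derivation of $\epsilon t=|U_{i}U^{i}|$.

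The paper's proof is far more pedestrian and does not involve the random field, anomalous dissipation, or the local definition $\epsilon=\nu\nabla_{j}U_{i}\nabla^{j}U^{i}$ at all. It takes the global energy balance law $\tfrac{d}{dt}\int_{\mathbf{D}}U_{i}U^{i}\,d^{3}x=-\int_{\mathbf{D}}\epsilon(x,t)\,d^{3}x$, integrates both sides in time over $[0,t]$, takes absolute values, and then imposes $\epsilon(x,t)=\epsilon=\mathrm{const}$ and $U_{i}(x,t)=U_{i}=\mathrm{const}$; the left side becomes $\epsilon\,t\,\mathrm{Vol}(\mathbf{D})$, the right side $|U_{i}U^{i}|\,\mathrm{Vol}(\mathbf{D})$, and cancelling the volume gives $\epsilon t=|U_{i}U^{i}|$. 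The $t$ is therefore the result of time-integrating a constant dissipation rate (total dissipated energy per unit mass balancing the kinetic energy density), not an eddy-turnover time. Your one-line disposal of $|U_{i}U^{i}|=\|U_{i}\|^{2}=U^{2}$ is fine, but to recover the substantive half you should abandon the anomalous-dissipation route and integrate the energy balance law in time as the paper does --- noting, if you wish to be critical, that the paper itself quietly drops the $t=0$ boundary term and the fact that a literally constant $U_{i}$ makes the left side vanish before integration; the argument is heuristic either way, but it is the one the lemma actually rests on.
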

\begin{proof}
The basic energy balance equation for a constant viscosity $\nu$ is
\begin{align}
\mathlarger{\epsilon}(t)=\frac{1}{2}\frac{d}{dt}\int_{\mathbf{R}^{3}}\|U_{i}(x,t)\|^{2}d^{3}x=-\nu\int_{\mathbf{R}^{3}}
\|\nabla_{i}U^{i}(x,t)\|^{2}d^{3}x
\end{align}
so that from (-) this is equivalent to
\begin{align}
-\int_{\mathbf{D}}\mathlarger{\epsilon}(x,t)d^{3}x=\frac{d}{dt}\int_{\mathbf{D}}U_{i}(x,t)U^{i}(x,t)d^{3}x
\end{align}
Now integrate over $[0,t]$ on both sides so that
\begin{align}
-\int_{0}^{s}\int_{\mathbf{D}}\mathlarger{\epsilon}(x,s)ds d^{3}x=\int_{\mathbf{D}}U_{i}(x,t)U^{i}(x,t)d^{3}x
\end{align}
and
\begin{align}
\left|-\int_{0}^{s}\int_{\mathbf{D}}\mathlarger{\epsilon}(x,s)ds d^{3}x\right|\equiv \left|\int_{0}^{s}\int_{\mathbf{D}}\mathlarger{\epsilon}(x,s)ds d^{3}x\right|=\left|\int_{\mathbf{D}}U_{i}(x,t)U^{i}(x,t)d^{3}x\right|
\end{align}
Now if $\epsilon(x,t)=\epsilon=const.$ and $U_{i}(x,t)=U_{i}=const. $ then
\begin{align}
\left|\mathlarger{\epsilon}\int_{0}^{s}\int_{\mathbf{D}}ds d^{3}x\right|=\left|U_{i}U^{i}\int_{\mathbf{D}}d^{3}x\right|
\end{align}
which is
\begin{align}
\left|\mathlarger{\epsilon}t Vol(\mathbf{D})\right|=\left|U_{i}U^{i}Vol(\mathbf{D})\right|
\end{align}
and so
\begin{align}
|\mathlarger{\epsilon}t|=\mathlarger{\epsilon}t=|U_{i}U^{i}|
\end{align}
\end{proof}
Checking the units on both sides gives $cm^{2}s^{-3}\times s=cm^{2}s^{-2}$, as required.
\begin{lem}
Given $|\mathlarger{\epsilon}t|=\mathlarger{\epsilon}t=|U_{i}U^{i}|$ it follows also that
\begin{align}
&\|U_{i}\|=U=\mathlarger{\epsilon}^{1/3}\ell^{1/3}\\&
\|U_{i}\|^{2}=U^{2}=\mathlarger{\epsilon}^{2/3}\ell^{2/3}\\&
\|U_{i}\|^{3}=U^{3}=\mathlarger{\epsilon}\ell
\end{align}
\end{lem}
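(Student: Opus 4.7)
The plan is to convert the temporal identity $\epsilon t = |U_{i}U^{i}| = U^{2}$ established in the preceding lemma into a spatial one by introducing a characteristic time scale associated to the length $\ell$. The natural choice, consistent both with Kolmogorov's cascade heuristics and with a pure convective interpretation, is to identify $t$ with the traversal/eddy-turnover time $t = \ell/U$ for a fluid element of velocity magnitude $U$ crossing a distance $\ell$. Dimensionally $[\ell/U] = \mathrm{cm}/(\mathrm{cm}\,\mathrm{s}^{-1}) = \mathrm{s}$, which matches the units of $t$ and of $\epsilon t$ ($\mathrm{cm}^{2}\mathrm{s}^{-2}$), so this assignment is compatible with the unit check already performed at the end of the proof of the preceding lemma.

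Next, I would substitute this identification into the identity $\epsilon t = U^{2}$ to obtain
\begin{align}
\epsilon\,\frac{\ell}{U} = U^{2},\nonumber
\end{align}
which rearranges immediately to $U^{3} = \epsilon\,\ell$. This is the third of the three claimed relations. Taking a real cube root yields $U = \epsilon^{1/3}\ell^{1/3}$, and squaring the latter produces $U^{2} = \epsilon^{2/3}\ell^{2/3}$. Because $U_{i}=(0,0,U)$ by hypothesis, $\|U_{i}\| = U$ and each of the three identities $\|U_{i}\|^{p} = \epsilon^{p/3}\ell^{p/3}$ for $p=1,2,3$ follows at once.

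The main obstacle is not computational but conceptual: the step $t = \ell/U$ is not derived from the Burgers or Navier--Stokes PDE in any rigorous sense; it is the standard Kolmogorov--Richardson dimensional identification, in which $\ell/U$ plays the role of the characteristic eddy turnover time at scale $\ell$. In keeping with the broader conventions of the paper, where the 4/5 and 2/3 laws are shown to \emph{emerge} from the engineered-field construction rather than being derived from Navier--Stokes from first principles, this heuristic identification is the one physical input required, and the rest of the lemma is a one-line algebraic consequence.
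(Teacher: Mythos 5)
Your proposal is correct and follows essentially the same route as the paper: the paper's proof likewise sets $U=\ell/t$, i.e. $t=\ell/U$, substitutes into $\epsilon t=U^{2}$ to get $U^{3}=\epsilon\ell$, and then takes roots to obtain the $p=1,2$ cases. Your added remark that the identification $t=\ell/U$ is a heuristic (eddy-turnover) input rather than a derived fact is a fair observation, but the mathematics is the same.
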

\begin{proof}
Choose $\ell$ such that
\begin{align}
\|U_{i}\|\equiv U=\frac{\ell}{t}
\end{align}
then
\begin{align}
t=\frac{\ell}{U}
\end{align}
Then from (3.60)
\begin{align}
\mathlarger{\epsilon}t=U^{2}=\mathlarger{\epsilon}\frac{\ell}{U}
\end{align}
which gives
\begin{align}
U^{3}=\mathlarger{\epsilon}\ell
\end{align}
so that
\begin{align}
\boxed{U=\mathlarger{\epsilon}^{1/3}\ell^{1/3}}
\end{align}
Equations (3.79) and (3.80) then follow.
\end{proof}
\begin{thm}\textbf{(Emergence of a 4/5-law via an 'engineered' turbulent flow in $\mathbf{D}\subset\mathbf{R}^{3}$)}\newline
Let a vector field $ U_{i}(x,t)$ be a deterministic/smooth flow within a domain $\mathbf{D}$ of volume $Vol(\mathbf{D})\sim L^{3}$ via the Burger's equations
\begin{align}
&\partial_{t}U_{i}(x,t)+\mathlarger{\mathscr{D}}_{N}U_{i}(x,t)\nonumber\\&
=\partial_{t}U_{i}(x,t)+U^{j}(x,t)\nabla_{j}U_{i}(x,t),~~(x,t)\in\mathbf{D}\times\mathbf{R}^{+}
\end{align}
and from some initial Cauchy data $U_{i}(x,0)=g_{i}(x)$. A trivial steady state solution is then $X_{i}(x,t)=U_{i}=g_{i}=const.$. Let $\psi(x)$ be a random Gaussian scalar field as previously defined, and again having an antisymmetric rational quadratic covariance kernel
\begin{align}
\mathlarger{\mathscr{K}}(x,y;\lambda)={\mathbb{E}}[\psi(x)\psi(y)]=f(x,y)K(\|x-y\|;\lambda)
\end{align}
Here, $f(x,y)$ is an antisymmetric function $f:\mathbf{D}\times\mathbf{D}\rightarrow \lbrace 0, 1\rbrace $ such that $f(x,y)=-f(y,x)$ with $f(x,y)=1$ for all $(x,y)\in\mathbf{D}$, and $f(y,x)=-1$ with $f(x,x)=f(y,y)=0$. Then $\partial_{x}f(x,y)=\partial_{y}f(x,y)=0$. The kernel $K(\|x-y\|;\lambda)$ is again any standard stationary  and isotropic covariance kernel for Gaussian random fields; for example a rational quadratic covariance with scale-mixing parameter $\alpha$ gives
\begin{align}
\mathlarger{\mathscr{K}}(x,y;\lambda)={\mathbb{E}}[\psi(x)\psi(y)]=\beta f(x,y)\left(1-\frac{\ell^{2}}{2\alpha\lambda^{2}}\right)^{-\alpha},~~(\alpha,\beta>0)
\end{align}
Then for $y=x+\ell$
\begin{align}
&{\mathbb{E}}[\mathlarger{\psi}(x)]=0\\&
{\mathbb{E}}[\psi(x)\psi(x)]=0\\&
{\mathbb{E}}[\psi(x+\ell)\psi(x+\ell)]=0\\&
{\mathbb{E}}[\psi(x)\psi(x+\ell)]=\beta f(x,x+\ell)K(\|\ell\|;\lambda)\\&
{\mathbb{E}}[\psi(x)\psi(x)\psi(x)]=0\\&
{\mathbb{E}}[\psi(x+\ell)\psi(x+\ell)\psi(x)]=0\\&
{\mathbb{E}}[\psi(x)\psi(x)\psi(x+\ell)]=0\\&
{\mathbb{E}}[\psi(x+\ell)\psi(x+\ell)\psi(x+\ell)]=0
\end{align}
Now let $\mathbf{Q}=[0,L]$ so that
\begin{align}
\mathbf{Q}=\mathbf{Q}_{1}\bigcup\mathbf{Q}_{2}=[0,\lambda]\bigcup(\lambda,L]\nonumber
\end{align}
then either $\ell\in\mathbf{Q}_{1}$ or $\ell\in\mathbf{Q}_{2}$. We now 'engineer' the following random field representing a turbulent fluid flow within $\mathbf{D}\subset\mathbf{R}^{d}$.
\begin{align}
{\mathcal{U}}_{i}(x,t)=U_{i}(x,t)+\theta U_{i}(x,t)\mathlarger{\psi}(x)\equiv  U_{i}(x,t)+\frac{\theta}{\sqrt{d(d+2}}U_{i}(x,t)\mathlarger{\psi}(x)
\end{align}
so that $\mathbb{E}[{\mathcal{U}}_{i}(x,t)]=U_{i}(x,t)$. The 3rd-order structure function is then
\begin{align}
\mathlarger{S}_{3}(\ell)={\mathbb{E}}[\left|{\mathcal{U}}_{i}(x+\ell,t)-{\mathcal{U}}_{i}(x,t)\right|^{3}]
\end{align}
Computing $S_{3}(\ell)$ and then letting $U_{i}(x,t)\rightarrow U_{i}=(0,0,U)$, one obtains
\begin{align}
\mathlarger{S}_{3}(\ell)= -\frac{12}{d(d+2)}\theta^{2}\beta\|U_{i}\|^{3}K(\|\ell\|;\lambda)
\end{align}
In three dimensions, $d=3$ and choosing $\theta=1$ gives
\begin{align}
\mathlarger{S}_{3}(\ell)= -\frac{12}{15}\|U_{i}\|^{3}K(\|\ell\|;\lambda)=
-\frac{4}{5}\|U_{i}\|^{3}K(\|\ell\|;\lambda)
\end{align}
Choosing the kernel (2.13) with $\beta=1$
\begin{align}
\mathlarger{S}_{3}(\ell)= -\frac{12}{15}\|U_{i}\|^{3}K(\|\ell\|;\lambda)=
-\frac{4}{5}\|U_{i}\|^{3}\left(1-\frac{\ell^{2}}{2\alpha\lambda^{2}}\right)^{-\alpha}
\end{align}
Then for $\ell\in\mathbf{Q}_{1}=[0,\lambda]$, with $\ell\ll\lambda$, the term $\frac{1}{2}|\ell/\lambda|^{2}$ is very close to zero so that
\begin{align}
\mathlarger{S}_{3}(\ell)= -\frac{4}{5}\|U_{i}\|^{3}
\end{align}
holds over this range of length scales. Finally, applying (3.80)
\begin{align}
\boxed{\mathlarger{S}_{3}(\ell)= -\frac{4}{5}\|U_{i}\|^{3}=-\frac{4}{5}\mathlarger{\epsilon}\ell}
\end{align}
This is then the exact 4/5-law of turbulence.
\end{thm}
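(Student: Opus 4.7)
The plan is to view this theorem as a direct composition of two results already established in the paper, with only the dimensional/physical identification $\|U_i\|^3 = \epsilon\ell$ added at the end. Everything stochastic has been done already in Theorem 2.10; the novelty here is only the fluid-mechanics interpretation.

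First I would observe that the engineered field $\mathcal{U}_i(x,t) = U_i(x,t) + \tfrac{\theta}{\sqrt{d(d+2)}} U_i(x,t)\psi(x)$ is formally identical to the field $\mathcal{X}_i(x,t)$ of Theorem 2.10 with $X_i \mapsto U_i$. The hypothesis that $U_i(x,t)$ evolves by the Burgers PDE is admissible because the PDE in Theorem 2.10 is only required to be of the form $\partial_t X_i + \mathscr{D}_N X_i = 0$ with $\mathscr{D}_N$ nonlinear; Burgers' equation with $\mathscr{D}_N U_i = -\nu \Delta U_i + U^j \nabla_j U_i$ fits exactly, and $U_i(x,t) = U_i = (0,0,U) = \mathrm{const}$ is a trivial steady state. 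The antisymmetric covariance $\mathscr{K}(x,y;\lambda) = f(x,y)K(\|x-y\|;\lambda)$ and all the moment identities (3.87)--(3.93) are the same as (2.42)--(2.50), so the calculation of $S_3(\ell)$ goes through verbatim.

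Next I would simply invoke the output of Theorem 2.10: after expanding $|\mathcal{U}_i(x+\ell,t) - \mathcal{U}_i(x,t)|^3$, taking the expectation, discarding all terms with odd moments or $\mathbb{E}[\psi(x)\psi(x)] = 0$, and using $\mathbb{E}[\psi(x)\psi(x+\ell)] = -\mathbb{E}[\psi(x+\ell)\psi(x)] = f(x,x+\ell)K(\|\ell\|;\lambda)$, the only surviving contributions combine into
\begin{align}
S_3(\ell) = -\frac{12\theta^2}{d(d+2)} \|U_i\|^3 K(\|\ell\|;\lambda). \nonumber
\end{align}
Setting $d=3$, $\theta=1$, $\beta=1$ and restricting to $\ell \in \mathbf{Q}_1 = [0,\lambda]$ with $\ell \ll \lambda$ (so that $K(\|\ell\|;\lambda) \to 1$ for either the rational-quadratic or Gaussian choice of kernel) yields $S_3(\ell) = -\tfrac{4}{5}\|U_i\|^3$, exactly as in (2.72).

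Finally, the physics step is to apply Lemma 3.7, which under constant dissipation rate $\epsilon$, constant viscosity $\nu \sim 0$, and the energy balance law gives the dimensional identity $\|U_i\|^3 = \epsilon\ell$. Substituting this into $-\tfrac{4}{5}\|U_i\|^3$ produces $S_3(\ell) = -\tfrac{4}{5}\epsilon\ell$, the classical Kolmogorov 4/5-law, valid over $\eta \le \ell \ll \lambda < L$ where $\eta = (\nu^{3/4}\epsilon)^{-1/4}$ enters as the lower cutoff from the high-Reynolds-number regime. The main obstacle, in my view, is not the algebra (which reduces to two prior lemmas) but justifying the dimensional substitution step: Lemma 3.7 rests on $t = \ell/U$, which is really a Kolmogorov-type dimensional closure rather than a derivation from Burgers' equation, and one should be careful to flag that this is the point at which the identification of the engineered random field with a turbulent fluid velocity, and the assumption of anomalous dissipation, genuinely enter.
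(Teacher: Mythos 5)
Your proposal is correct and follows essentially the same route as the paper: the paper supplies no separate proof for this theorem, relying implicitly on the verbatim computation of Theorem 2.10 with $X_{i}\mapsto U_{i}$ and then invoking the identity $\|U_{i}\|^{3}=\epsilon\ell$ from the preceding energy-balance lemmas, exactly as you describe. Your closing caveat --- that the substitution $t=\ell/U$ is a Kolmogorov-type dimensional closure rather than a consequence of the Burgers dynamics --- is a fair and accurate reading of where the physical assumptions actually enter.
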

\begin{thm}
Let the scenario of Thm (2.10) and (2.12) hold. With $\mathcal{U}_{i}(x,t)$ replacing $\mathcal{X}_{i}(x,t)$, the 2nd-order structure function is given by (2.80)
\begin{align}
\mathlarger{S}_{2}(\ell)= C\|U_{i}\|^{2}
\end{align}
Then using (3.79)
\begin{align}
\boxed{\mathlarger{S}_{2}(\ell)= C\mathlarger{\epsilon}^{2/3}\ell^{2/3}}
\end{align}
which is the $2/3$ law, with $C$ a constant.
\end{thm}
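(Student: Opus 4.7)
The plan is to invoke Theorem 2.12 essentially verbatim, with the turbulent velocity $\mathcal{U}_i(x,t)$ substituted for the abstract engineered random vector field $\mathcal{X}_i(x,t)$. Since
\begin{align}
\mathcal{U}_i(x,t)=U_i(x,t)+\frac{\theta}{\sqrt{d(d+2)}}U_i(x,t)\psi(x) \nonumber
\end{align}
has precisely the algebraic form of the ansatz in Proposition 2.6 with $X_i\mapsto U_i$, and since $\psi(x)$ is assumed to have exactly the same antisymmetric covariance and vanishing moments used in Theorem 2.12, the term-by-term expansion of $|\mathcal{U}_i(x+\ell,t)-\mathcal{U}_i(x,t)|^2$ is identical. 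Taking the stochastic expectation, the $\mathbb{E}[\psi(x)]$, $\mathbb{E}[\psi(x)\psi(x)]$, and $\mathbb{E}[\psi(x+\ell)\psi(x+\ell)]$ terms all vanish (the last two because the antisymmetric factor $f$ satisfies $f(x,x)=0$), and only the cross term $\mathbb{E}[\psi(x)\psi(x+\ell)]=f(x,x+\ell)K(\|\ell\|;\lambda)$ survives.

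Setting $U_i(x,t)=U_i=(0,0,U)$ with $\|U_i\|=U$ and applying Theorem 2.12 in dimension $d=3$, I therefore obtain
\begin{align}
S_2(\ell)=\frac{2\theta^2}{15}\|U_i\|^2 K(\|\ell\|;\lambda)=C\|U_i\|^2 K(\|\ell\|;\lambda). \nonumber
\end{align}
Restricting to the range $\ell\in\mathbf{Q}_1=[0,\lambda]$ with $\ell\ll\lambda$, the rational quadratic (or Gaussian) kernel satisfies $K(\|\ell\|;\lambda)\to\beta$, a constant which can be absorbed into $C$. This yields the intermediate conclusion $S_2(\ell)=C\|U_i\|^2$, exactly as stated.

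To convert this into a scaling law in $\ell$, I then apply Lemma 3.11, equation (3.79), which reads $\|U_i\|^2=\epsilon^{2/3}\ell^{2/3}$. This identity was derived from the energy balance law under the assumptions of constant dissipation rate $\epsilon$ and constant velocity, via the intermediate relation $\epsilon t=|U_iU^i|$ together with $t=\ell/U$. Substituting directly gives
\begin{align}
S_2(\ell)=C\epsilon^{2/3}\ell^{2/3}, \nonumber
\end{align}
which is the classical Kolmogorov 2/3-law. The step that deserves the most scrutiny is not the algebraic computation but the use of (3.79): it effectively re-expresses the constant amplitude $\|U_i\|$ of the trivial Burgers solution in terms of the dissipation rate and the inertial-range length scale, and it is this dimensional-energetic relabeling that injects the $\ell^{2/3}$ dependence into what would otherwise be a constant. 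All other steps are immediate corollaries of results already established in the excerpt.
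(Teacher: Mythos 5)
Your proposal is correct and follows exactly the route the paper takes: it re-runs the Theorem 2.12 computation with $U_{i}$ in place of $X_{i}$ to obtain $S_{2}(\ell)=C\|U_{i}\|^{2}K(\|\ell\|;\lambda)\to C\|U_{i}\|^{2}$ for $\ell\ll\lambda$, and then substitutes the energy-balance relation (3.79), $\|U_{i}\|^{2}=\epsilon^{2/3}\ell^{2/3}$, to produce the 2/3-law. Your closing observation that the entire $\ell^{2/3}$ dependence is injected by the dimensional relabeling in (3.79), rather than by the structure-function computation itself, is an accurate reading of how the paper's argument works.
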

\section{CONCLUSION}
The classical 2/3 and 4/5 laws of turbulence can be reproduced from a theory of engineered random fields, existing within an Euclidean domain $\mathbf{D}$. It is
admitted that the fields and their kernels, have been 'reverse-engineered' in this way so to speak, in order to get the required answers. However, this still demonstrates that the concept of random fields can lead to these important classical results when one computes their structure functions. An insight of Kolmogorov's work was that turbulent flows seem to be essentially random fields. It has been assumed that the noise or random fluctuation in fully developed turbulence is a generic noise determined by well-established general theorems in probability theory, stochastic analysis, and random fields or functions. Classical random fields or functions correspond naturally to structures, and properties of systems, that are varying randomly in time and/or space, and this should include turbulent fluids. Expectations or stochastic averages are also well defined. Rigorously defining time or spatial statistical averages in conventional statistical hydrodynamics however, is fraught with technical difficulties and limitations, as well as having a limited scope of physical applicability.
\clearpage

}
\end{document}